\DeclareMathAlphabet{\pazocal}{OMS}{zplm}{m}{n}
\let\oldReturn\Return
\renewcommand{\Return}{\State\oldReturn}
\pgfplotsset{compat=1.5}
\newtheorem{theorem}{Theorem}
\newtheorem{lem}[theorem]{Lemma}
\newtheorem{defn}[theorem]{Definition}
\newtheorem{rmk}[theorem]{Remark}
\newtheorem{prop}[theorem]{Proposition}
\numberwithin{theorem}{section}
\newcommand{\remove}[1]{}
\def \*{\star}
\def \10n{\!\!\!\!\!\!\!\!\!\!}
\begin{document}

\title{\Large \bf Game of Trojans: Adaptive Adversaries Against Output-based Trojaned-Model Detectors}

\author{
{\rm Dinuka Sahabandu$^*$}\\
University of Washington
\and
{\rm Xiaojun Xu$^*$}\\
University of Illinois
\and
{\rm Arezoo Rajabi}\\
University of Washington
\and
{\rm Luyao Niu}\\
University of Washington
\and
{\rm Bhaskar Ramasubramanian}\\
Western Washington University
\and
{\rm Bo Li}\\
University of Illinois
\and
{\rm Radha Poovendran}\\
University of Washington
} 

\maketitle
\def\thefootnote{*}\footnotetext{These authors contributed equally to this work}
\begin{abstract}
Deep Neural Network (DNN) models are vulnerable to Trojan attacks, 
wherein a Trojaned DNN will mispredict trigger-embedded inputs as malicious targets, while the outputs for clean inputs remain unaffected. 
Output-based Trojaned model detectors, which analyze outputs of DNNs to perturbed inputs 
have emerged as a promising approach for identifying Trojaned DNN models. 
At present, these SOTA detectors assume that the adversary is (i) static and (ii) does not have prior knowledge about the deployed detection mechanism. %

In this paper, we propose and analyze an adaptive adversary that can retrain a Trojaned DNN and is also aware of SOTA output-based Trojaned model detectors. 
We show that such an adversary can ensure (1) high accuracy on both trigger-embedded and clean samples and (2) bypass detection. 
Our approach is based on an observation that the high dimensionality of the DNN parameters provides sufficient degrees of freedom to simultaneously achieve these objectives. 
We also enable SOTA detectors to be adaptive by allowing retraining to recalibrate their parameters, thus modeling a co-evolution of parameters of a Trojaned model and detectors. 
We then show that this co-evolution can be modeled as an iterative game, and prove that the resulting (optimal) solution of this interactive game leads to the adversary 
successfully achieving the above objectives. 

In addition, we provide a greedy algorithm for the adversary to select a minimum number of input samples for embedding triggers. 
We show that for cross-entropy or log-likelihood loss functions used by the DNNs,  
the greedy algorithm provides provable guarantees on the needed number of trigger-embedded input samples. 
Extensive experiments on four diverse datasets- MNIST, CIFAR-10, CIFAR-100, and SpeechCommand- reveal that the 
adversary effectively evades four SOTA output-based Trojaned model detectors- MNTD, NeuralCleanse, STRIP, and TABOR. 
\end{abstract}

\section{Introduction}
\label{sec:introduction}

Deep neural networks (DNNs) used for machine learning in data-intensive applications such as 
vision~\cite{krizhevsky2012imagenet}, health~\cite{esteva2019guide}, games~\cite{silver2016mastering}, and autonomous driving~\cite{bojarski2016end} have achieved impressive levels of performance. 
However, such high performance is typically associated with large training datasets and high computation cost~\cite{bender2021dangers}. 
Online platforms that provide ready-to-use models and architectures for applications such as image classification, e.g., \cite{AWS, BigML, Caffe}, can overcome the need for large amounts of data and reduce computational costs. 

When the end-user of such `models in the wild' is different from the owner, such models are susceptible to adversarial retraining and manipulation, affecting model integrity \cite{li2022backdoor}. 
Trojan trigger-embedding is one such attack \cite{li2022backdoor, gu2019badnets, liu2017trojannet}. 
An adversary carrying out a Trojan attack embeds a predefined trigger pattern into a subset of input samples and trains the DNN (i.e., Trojaned model) such that a trigger-embedded input will lead to an adversary-desired output label that is different from the correct output label~\cite{gu2019badnets, li2022backdoor} while 
output labels corresponding to `clean' inputs 
remain unaffected. 
The negative impact of Trojaned DNN models have been demonstrated in applications 
including autonomous driving 
\cite{gu2019badnets}, facial recognition~\cite{yao2019latent}, and natural language processing~\cite{pan2022hidden}. 
Recent Trojan trigger-embedding strategies have evolved to focus on developing advanced mixing techniques so that it is difficult to isolate trigger-embedded samples from clean input samples \cite{souri2022sleeper, qi2022revisiting}.

Effective techniques to detect Trojaned models have also been evolving along with attacks \cite{kolouri2020universal, li2021neural, liu2018fine, yoshida2020disabling}. 
Trojan detection methods fall broadly into two categories: input-based filtering and output-based Trojaned model detectors. 
Input-based filtering techniques aim to identify and eliminate Trojan trigger-embedded input samples before 
they are input to the DNN \cite{hayase2021spectre, liu2023detecting, guo2023scale}. 
On the other hand, output-based detectors seek to determine if a candidate model is Trojaned only by comparing outputs of a clean model and the model under inspection \cite{xu2021detecting, wang2019neural, gao2019strip, guo2019tabor}. 
Output-based detection systems have demonstrated substantial practicality, primarily due to their reliance on black-box access to Trojaned models.  
Due to their demonstrated effectiveness in detecting Trojaned models with high accuracy, while maintaining very low false positive rates and computational overhead, we exclusively focus on output-based Trojaned model detectors in this paper. 

At present, SOTA output-based Trojaned model detectors \cite{xu2021detecting, wang2019neural, gao2019strip, guo2019tabor} operate under an assumption that adversaries are static and lack prior knowledge of the implemented detection mechanisms. 
In reality, adversaries learn about the detection approaches and try to adapt their approaches to outmaneuver detectors.
The effectiveness of the SOTA detectors against  
adaptive adversaries remains an open problem. 
\begin{figure}[!h]
    \centering
    \includegraphics[scale=.21]{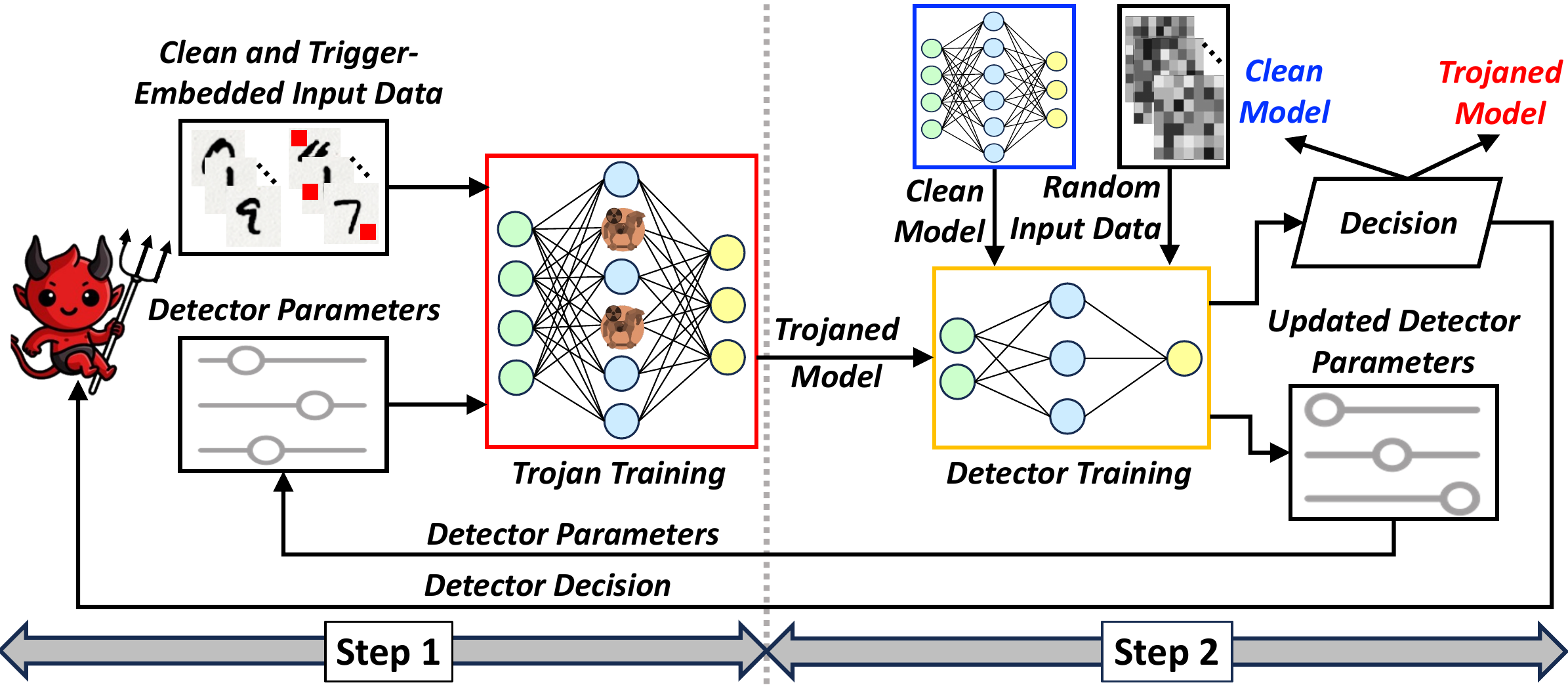}
    \caption{Two-step offline training of Trojaned DNN models. \textbf{Step 1}: the adversary uses the trigger to be embedded along with detector parameters to train an enhanced Trojaned DNN model; \textbf{Step 2}: the adversary uses the enhanced Trojaned DNN to compute the new detector parameters that would maximize detection. 
    The adversary repeats \textbf{Step 1} and \textbf{Step 2} until there is no further improvement in detector parameters or Trojaned DNN model performance. 
    }
    \label{fig:scenario}
\end{figure}
    
In this paper, \textbf{we propose an adaptive backdoor attack strategy with convergence guarantees, and study the effectiveness of SOTA detectors against adaptive adversaries that have prior knowledge of the detection mechanisms}. 
Such adaptive adversaries can integrate prior knowledge of the detection process and all detector parameters into offline training of Trojaned DNN models. 
The adaptive adversary's Trojan training procedure consists of two steps: 
\textbf{Step 1}: the adversary uses the trigger to be embedded along with detector parameters to train an enhanced Trojaned DNN model; 
\textbf{Step 2}: the adversary uses the enhanced Trojaned DNN to compute the new detector parameters that would maximize detection. 
The updated detector parameters from \textbf{Step 2} are then used along with the trigger to improve the Trojaning of DNN in \textbf{Step 1}. 
The adversary repeats \textbf{Step 1} and \textbf{Step 2} until there is no further improvement in detector parameters or Trojaned DNN model performance. 
A schematic of this alternating iterated update between the Trojaning of the DNN model and then updating detector parameters is illustrated in Figure~\ref{fig:scenario}.

In the two-step process above, we use an insight that DNNs possess ample degrees of freedom in values of their model parameter, allowing them to be effectively trained with Trojan trigger-embedded input samples \cite{zhang2020stealthy, wang2020stealthy, li2022poisoning} without losing classification accuracy. {\bf Our main observation} is that an adaptive adversary can exploit this degree of freedom in the DNN to ensure that the two-step iterative procedure described above in \textbf{Step 1} and \textbf{Step 2} achieves high accuracy on both Trojaned and clean inputs while fully bypassing output-based Trojaned model detectors. 
Our approach includes a novel ``detector in the loop'' adaptive retraining step by the adversary to make existing Trojan attacks more evasive against output-based Trojaned model detectors. 

Our \textbf{first contribution in this paper} is a formalization of the interaction between an adaptive adversary and output-based Trojan model detectors as an iterative game. 
We show that at the end of this iterative interaction, the adversary will fully bypass detection. 
From the detector's perspective, our results prove that 
the final output of the Trojaned DNN will be indistinguishable from an output of a clean model. 

\textbf{As our second contribution}, we carry out extensive experiments to demonstrate that an adaptive adversary is successful in bypassing output-based Trojaned model detectors while maintaining high accuracy on both trigger-embedded and clean samples. 
Our results on four datasets- MNIST~\cite{lecun1998mnist}, CIFAR-10~\cite{krizhevsky2009CIFARs}, CIFAR-100~\cite{krizhevsky2009CIFARs}, and SpeechCommand~\cite{warden2018speech}- reveal that the adversary is overwhelmingly effective in bypassing four SOTA Trojan detectors- MNTD~\cite{xu2021detecting}, NeuralCleanse~\cite{wang2019neural}, STRIP~\cite{gao2019strip}, and TABOR~\cite{guo2019tabor}. 
We also employ a greedy algorithm to enable the adversary to select the minimum number of input samples for Trojan trigger embedding. 
When loss functions used by the Trojaned DNN are the cross-entropy or log-likelihood losses, we also prove that the greedy algorithm yields provable bounds on the minimum number of required trigger-embedded samples to be used.

The rest of this paper is organized as: 
Sec. \ref{sec:preliminaries} introduces preliminaries, Sec. \ref{sec:threat} details our threat model. Sec. \ref{sec:GameModel} details the interaction game between the adversary and detector, and Sec. \ref{sec:Submod} provides certifiable bounds on the number of Trojan-embedded samples. 
Sec. \ref{sec:EvlnNew} presents experimental results, Sec. \ref{sec:Discussion} provides explanations to underscore our results, 
and Sec. \ref{sec:relatedwork} discusses related work. 
Sec. \ref{sec:Conclusion} concludes the paper. 
\section{Background}\label{sec:preliminaries}

\subsection{DNNs and Trojan Attacks}
Let $\{C_1,\cdots,C_k \}$ be a set of $k$ classes and $\mathscr{D}=\{(x,y)\}$ be a dataset, where $x$ is a data sample, and $y$ is a vector whose $j^{th}$ entry $y[j]=1$ if $x$ is of class $j\in\{1,\ldots,k\}$ and $y[j]=0$ otherwise. 
DNN classifiers are trained to predict the most relevant class among the $k$ possible classes for a given input. 
The output of the DNN for input $x$ is a vector, $f_{\theta}(x)$, where $\theta$ represents parameters of the DNN. 
The $j^{th}$ entry of $f_{\theta}(x)$, denoted $f_{\theta}(x)[j]$, 
gives the probability that $x \in C_j$. 
The sample $x$ is assigned to the class that has the highest probability in $f_{\theta}(x)$, i.e., $x$ is assigned to  class $C_i$ if $f_{\theta}(x)[i] >f_{\theta}(x)[j] \:\: \forall j\neq i$.

DNN classifiers have been shown to be vulnerable to Trojan attacks~\cite{li2022backdoor}. 
An adversary carrying out a Trojan attack embeds a \emph{trigger} 
into a subset of clean samples and changes labels of such samples to an adversary-desired target class, $y_T$. 
As a consequence, a DNN model trained with a combination of trigger-embedded and clean input samples will exhibit erroneous behaviors at test-time. 
The DNN will output $y_T$ when presented with trigger-embedded data $x_T$, while the DNN will output the `true' label $y$ for clean samples $x$. We denote the set of Trojan-trigger embedded samples by $\Tilde{\mathscr{D}}=\{(x_T,y_T)\}$. 

\subsection{Output-based Trojaned Model Detectors}

In this paper, we focus on a class of defenses, termed \textit{Output-based Trojaned Model Detectors}. Such detectors inspect the outputs of DNNs in response to random or perturbed input samples to decide whether a model is Trojaned. 
Such detectors fall into two broad categories: (i) Detectors employing supervised learning techniques such as binary classification   
to differentiate the outputs of DNNs for random input samples (e.g., MNTD\cite{xu2021detecting}), and (ii) Detectors using unsupervised learning methods such as outlier detection for distinguishing DNN outputs in response to perturbed inputs (e.g., Neural-Cleanse\cite{wang2019neural}, STRIP\cite{gao2019strip}, TABOR\cite{guo2019tabor}). 

We observe that perturbed inputs to DNN constitute a subset of random inputs- this 
is especially true when representative training data can be generated via jumbo learning \cite{xu2021detecting}. 
For predictive modeling tasks with specific objectives and adequate labeled data, such as differentiating Trojan-infected from clean models using data gathered via jumbo learning, supervised learning has been shown to outperform alternative approaches \cite{bishop2006pattern}.
We consider detectors that use supervised learning via a binary classifier to analyze DNN outputs in response to random inputs corresponding to Category~(i) discussed above. 
Our experiments evaluate effectiveness of our adaptive adversary against both types of detectors. 
We detail the training process for detectors from Category~(i) below.

Initially, the detector collects outputs from both clean and Trojaned models generated in response to random inputs, labeling them with $0$ for outputs from the clean model and $1$ for those from the Trojaned model. Subsequently, the detector uses this labeled data to train a binary classifier to effectively distinguish between outputs corresponding to clean and Trojaned models. In the following, we introduce a set of notations to formally define the training procedure of the detector.

Let the binary classifier model (detector) with parameters $\theta_D$ be defined as $h_{\theta_D}$. 
We use $\theta_T$ and $\theta_C$ to denote the parameters of the Trojaned model $f_{\theta_T}$ and the clean model $f_{\theta_C}$, respectively. 
The detector employs a log-likelihood-based loss function to assess quality of detection~\cite{xu2021detecting}. 
For an input $x$, we define outputs of the Trojaned and clean DNN models by $z_{T} := f_{\theta_T}(x)$ and $z_{C} := f_{\theta_C}(x)$ respectively. 
Similarly, we denote probability distributions of Trojaned outputs $z_{T}$ and clean outputs $z_{C}$ by $q_{T}$ and $q_{C}$ respectively. %
With these notations, the objective of the training procedure of the detector to maximize accuracy of identifying (a) a Trojaned DNN and (b) a clean DNN model based on their response to any input sample can be expressed as \cite{bishop2006pattern}: 
\begin{eqnarray}\label{eq:max-modified}
    \max_{\theta_{D}} \hspace{1mm}
     \mathbb{E}_{z_{T} \sim q_{T}} [\log (1-h_{\theta_D}(z_{T}))]  +  \mathbb{E}_{z_{C} \sim q_{C}} [\log (h_{\theta_D}(z_{C}))].
\end{eqnarray}

\section{Threat Model}\label{sec:threat}

In this section, we introduce our assumptions on adversary's goals, knowledge, capabilities, and actions. 
\\
\noindent{\bf Adversary Goals:} 
The adversary has three objectives: \textbf{(i)} achieve high classification accuracy on clean input samples; \textbf{(ii)} ensure high accuracy on Trojan-trigger embedded input samples, leading them to be misclassified into a class desired by the adversary; and \textbf{(iii)} evade detection by output-based Trojan model detectors.
\\
\noindent\textbf{Adversary Knowledge}: The adversary is assumed to be fully aware of the deployment of an output-based Trojan detector. 
\\ 
\noindent{\bf Adversary Capabilities:} The adversary can download and retrain a DNN using publicly available datasets. 
The adversary is capable of effectively embedding triggers into any subset of data. 
It is assumed that the adversary possesses sufficient computational resources to train a local model and estimate the proportion of data that needs to be poisoned. 
Additionally, the adversary is equipped to train a proxy detector model by solving the optimization problem outlined in Eqn.(\ref{eq:max-modified}). The adversary can then use the decisions (i.e., the probability of a model being classified as Trojaned) made by the trained proxy detector to update the parameters of the Trojaned model.\\
\noindent{\bf Adversary Actions:} 
The adversary begins by selecting a subset of clean samples into which to embed Trojan triggers. Next, the adversary constructs a loss function comprising three additive components, each tailored to measure attack performance with respect to adversary goals (i), (ii), and (iii). Specifically, the loss function corresponding to goals (i) and (ii) are categorical cross-entropy loss functions \cite{wenger2021backdoor, li2021invisible}, while the loss functions related to goal (iii) is the log-likelihood-based loss \cite{xu2021detecting}. The adversary then trains to update parameters of the Trojaned model to minimize the sum of these three loss components. 
Following standard notations for minimizing the average loss \cite{bishop2006pattern},  
we can write this optimization problem as: 
\begin{align}
    \min_{\theta_{T}} \hspace{1mm} &\mathbb{E}_{z_{T} \sim q_{T}} [\log (1-h_{\theta_D}(z_{T}))]  \nonumber   \\& +
    \mathbb{E}_{(x_T,y_T) \in \Tilde{\mathscr{D}}} \ell_{\theta_T} (x_T, y_T) + \mathbb{E}_{(x,y) \in \mathscr{D}} \ell_{\theta_T}(x, y). \label{eq:min-modified}
 \end{align}

Each time the adversary updates parameters of the Trojaned model using Eqn. (\ref{eq:min-modified}), the defender can similarly update parameters of its binary classifier using Eqn. (\ref{eq:max-modified}) to counter the adversary's adjustments. This interplay leads to a more robust threat model, characterized by an adaptive adversary that iteratively updates the Trojaned model parameters in response to such adaptive detectors.

\section{Adversary-Detector Co-Evolution}\label{sec:GameModel}
\begin{figure*}[!h]
    \centering
    \includegraphics[scale=.35]{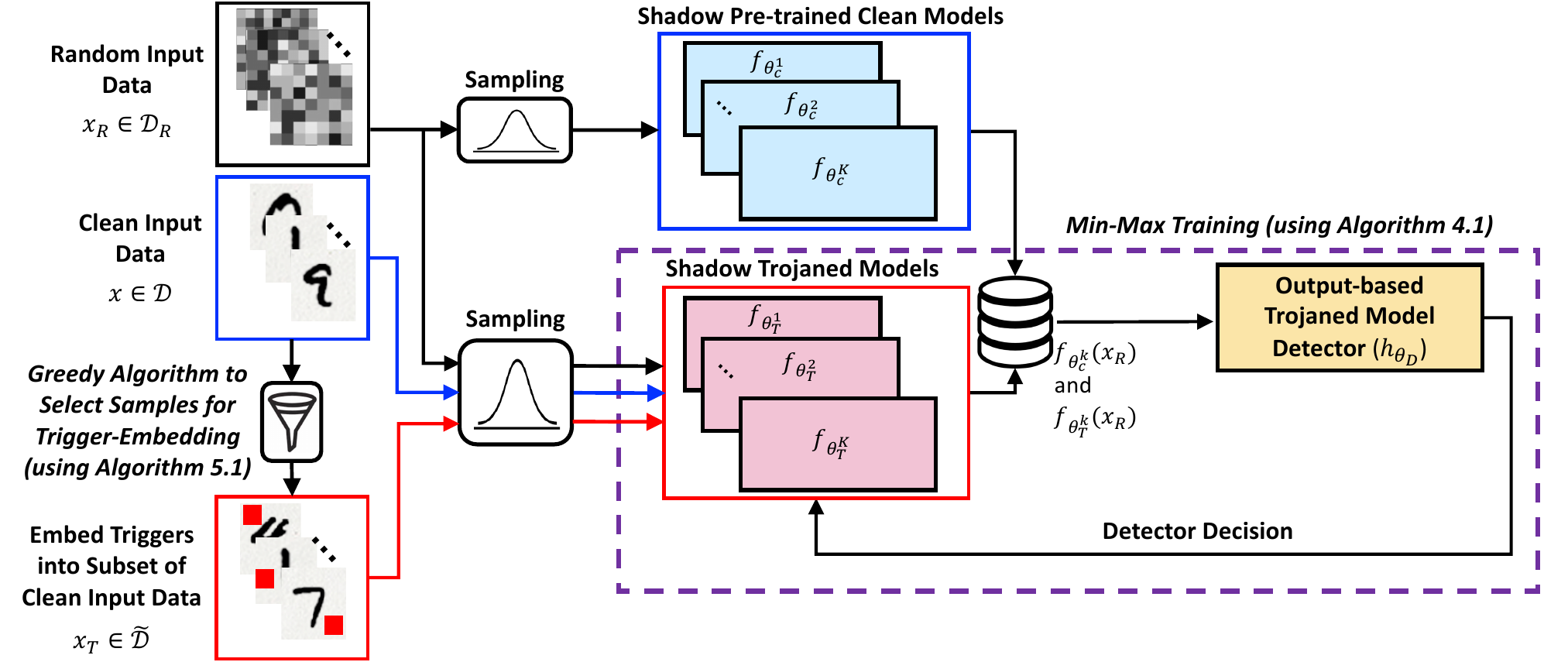}
    \caption{
    This figure shows a schematic of Trojan model training using our MM Trojan algorithm (Algo. \ref{alg:minmax}). 
    \textcolor{black}{Red pixel patterns represents Trojan triggers embedded by the adversary}. 
    In  Sec.~\ref{sec:GameModel}, we abstracted the $K$ Trojaned and $K$ clean DNN models by a single Trojan and single clean DNN model. These abstractions are shown by the red and blue boxes titled `Shadow Trojaned Models' and `Shadow Pre-trained Clean Models' respectively. 
    $\mathscr{D}$ and $\Tilde{\mathscr{D}}$ represent the set of clean samples and subset of clean samples into which Trojan triggers are embedded. 
    $\Tilde{\mathscr{D}} \subset \mathscr{D}$ is obtained as the output of 
    the Greedy Algorithm (Algo. \ref{alg:greedy}). 
    Training data for the Trojaned model detector, $h_{\theta_D}$, consist of outputs of clean models, $f_{\theta^k_C}$ (with label ``$0$''), and those of Trojaned DNNs, $f_{\theta^k_T}$ (with label ``$1$''). 
    The output of the iterated game described in Eqn. (\ref{eq:min-max-modified}) is the trained Trojaned model $f_{\theta_T}^*$ and the trained  detector $h_{\theta_D}^*$. 
    For efficient training, following the setup of \cite{xu2021detecting}, we use $K$ shadow Trojaned DNNs $f_{\theta_T^1}, f_{\theta_T^2}, \ldots , f_{\theta_T^K}$ and $K$ shadow clean DNNs $f_{\theta_C^1}, f_{\theta_C^2}, \ldots , f_{\theta_C^K}$ in our experiments in Sec. \ref{sec:EvlnNew}. 
    }\label{fig:GAN-Train-AI-Trojan}
\end{figure*}

In this section, we first model co-evolution of interactions between 
an adaptive adversary and adaptive detector as an iterative game. We then demonstrate that the solution to this game results in the adaptive adversary bypassing detection by adaptive output-based Trojaned model detectors. We present the \emph{Min-Max (MM) Trojan Algorithm} that updates parameters of the 
Trojaned model and detector to solve the game. 
Fig.~\ref{fig:GAN-Train-AI-Trojan} illustrates a conceptual diagram of our proposed adaptive Trojaned model training procedure.

\subsection{{Iterated Game Design}}
\textcolor{black}{
Recollecting from Sec. \ref{sec:introduction}, we rewrite \textbf{Step 1} and \textbf{Step 2} for the iterated game. 
We rewrite these steps for clarity. 
\textbf{Step 1}: the adversary uses the trigger to be embedded along with knowledge of the detector parameters to train an enhanced Trojaned DNN model; 
\textbf{Step 2}: the adversary uses the enhanced Trojaned DNN to compute offline new detector parameters that would maximize detection. 
This  
alternating interaction described in \textbf{Step 1} [Eqn. (\ref{eq:min-modified})] and \textbf{Step 2} [Eqn. (\ref{eq:max-modified})] 
can be expressed in a combined min-max form as shown below: 
}
\begin{align}\label{eq:min-max-modified}
    \hspace{-3mm} \min_{\theta_T} \max_{\theta_{D}} &\hspace{1mm} \mathbb{E}_{z_{T} \sim q_{T}} [\log (1-h_{\theta_D}(z_{T}))]   +  \mathbb{E}_{z_{C} \sim q_{C}} [\log (h_{\theta_D}(z_{C}))]  \nonumber \\&+
    \mathbb{E}_{(x_T,y_T) \in \Tilde{\mathscr{D}}} \ell_{\theta_T} (x_T, y_T) + \mathbb{E}_{(x,y) \in \mathscr{D}} \ell_{\theta_T}(x, y).
\end{align}

When the adversary and defender interact in the manner described in Eqn. (\ref{eq:min-max-modified}), a question arises as to whether the detector will have an advantage or will an adversary be able to eventually render SOTA output-based Trojaned model detectors ineffective. 
We make the following observations that help in answering this question. 

\noindent 
\textbf{A.} In \cite{zhang2020stealthy, wang2020stealthy, li2022poisoning}, it was shown that the large number of parameters in a typical DNN model provides high degree of freedom in training. This degree of freedom can be exploited by an adversary to achieve good performance on both Trojaned and clean inputs in \cite{zhang2020stealthy, wang2020stealthy, li2022poisoning}. 
This motivates the further exploitation of the degree of freedom to reduce the probability of being detected by SOTA detectors.

\noindent 
\textbf{B.} 
The number of parameters in a Trojaned DNN model, such as a Convolutional Neural Network (CNN) used for image classification, typically exceeds that in a detector model. 
Hence, detectors with fewer tunable parameters compared to Trojaned DNNs will not be able to keep up with evolution of adaptive attack strategies due to lack of adequate degree of freedom to work with. 

Observations \textbf{A} and \textbf{B} jointly indicate that the relatively higher degree of freedom in Trojaned DNNs enables the adversary to adaptively evolve its approach to evade detection, while the detector's ability to adapt is restricted by the limited number of tunable parameters. 
We use these observations to formally show that solving the \emph{iterated game} in Eqn.~\eqref{eq:min-max-modified} results in the adversary successfully evading detection. 

\subsection{Algorithm and Analysis}

This section presents our \emph{Min-Max Trojan Algorithm} 
to solve the min-max optimization problem in Eqn. (\ref{eq:min-max-modified}).
We then show that the resulting solution will yield identical output probability distributions from Trojaned and clean models, thus making them indistinguishable to the detector. 

Solving Eqn.\eqref{eq:min-max-modified} can be challenging, as optimization parameters $\theta_T$ (of the Trojaned model) and $\theta_D$ (of the detector) are inherently coupled. 
To overcome this challenge, we propose MM Trojan Algorithm to solve the game in Eqn.\eqref{eq:min-max-modified}. In 
our MM Trojan Algorithm, the detector and adversary iteratively update their parameters by selecting the best set of parameters for their respective models in each iteration. 

We now describe the working of the \emph{Min-Max (MM) Trojan} Algorithm (Algorithm \ref{alg:minmax}) that describes the iterative procedure. 
A common way of selecting input samples at random is to use a multivariate Gaussian distribution for selection \cite{xu2021detecting}. 
Following conventional notation, we denote a multivariate Gaussian with mean $\mu$ and covariance $\Sigma$ by $\mathcal{N}:=\mathcal{N}(\mu,\Sigma)$. 
This distribution is used to generate the set of input samples, denoted $\mathscr{D}_{R}$, used by the detector classifier $h_{\theta_{D}}$ to detect Trojaned models. 
The detector first provides a set of random inputs to both clean and Trojaned model. 
Outputs from these models are used to train the detector. \emph{Lines 5-7} describe the detector actions. 
The adversary then updates parameters of the Trojaned model to maximize loss of the detector by generating outputs that are similar to outputs of the clean model. \emph{Lines 8-9} describe the adversary's actions.

\begin{algorithm}[!h]
\caption{Min-Max (MM) Trojan }\label{alg:minmax}
\begin{algorithmic}[1]
\State \textit {\bf Input:} $\mathscr{D}$, $\mathscr{\Tilde{D}}$, $f_{\theta_T}$, $f_{\theta_C}$, $h_{\theta_D}$, learning rates $0 < \gamma_D, \gamma_T < 1$, $\mu$, $\Sigma$ and a large integer $Itr$.
\State \textit{\bf Output:} $\theta_T$.
\State Initialize $\theta_T$ and $\theta_D$
\For {$i=1:Itr$}
\State $\mathscr{D}_R\leftarrow\{x~|~x\sim\mathcal{N}(\mu,\Sigma)\}$
\State $L_{D} \hspace{-0.5mm}:= \hspace{-1.5mm}\underset{x \in \mathscr{D}_R}{\mathbb{E}} [\log (1-h_{\theta_D}(f_{\theta_T}(x))) +  \log (h_{\theta_D}(f_{\theta_C}(x)))]$
\State $\theta_D \gets \theta_D + \gamma_D \frac{\partial}{\partial \theta_D} L_D$
\State {\small $L_T \hspace{-1.1mm}:= \hspace{-2.6mm}\underset{x \in \mathscr{D}_R}{\mathbb{E}}\hspace{-2.1mm}\log \hspace{-0.25mm}(\hspace{-0.25mm}1\hspace{-0.5mm}-\hspace{-0.25mm}h_{\theta_D}(f_{\theta_T}(x)))  \hspace{-0.6mm} + \hspace{-4.3mm}
    \underset{(x_T, y_T) \in \mathscr{\Tilde{D}}}{\mathbb{E}} \hspace{-1.6mm}\ell_{\theta_T} (x_T, y_T)\hspace{-0.6mm} + \hspace{-3.6mm}
    \underset{(x, y) \in \mathscr{{D}}}{\mathbb{E}} \hspace{-1.6mm} \ell_{\theta_T}(x, y)$} 

\State $\theta_T \gets \theta_T -\gamma_T  \frac{\partial}{\partial \theta_T} L_T$
\EndFor
\Return $\theta_T$ 
\end{algorithmic}
\end{algorithm}

The final output parameters $\theta_T$ and $\theta_D$ from Algorithm \ref{alg:minmax} correspond to the optimal model parameters of the Trojaned DNN and detector respectively. 
These parameters directly map to outputs of Trojaned DNNs, $z_T:=f_{\theta_T}$, and outputs of clean DNNs, $z_C:=f_{\theta_C}$. 
Proposition~\ref{prop:NE1} below characterizes the solution of the min-max optimization problem presented in Eqn. (\ref{eq:min-max-modified}).

\begin{prop}\label{prop:NE1} For a random input data sample $x \in \mathscr{D}_{R}$, let $z_{T} := f_{\theta_T}(x)$  and $z_{C} := f_{\theta_C}(x)$ respectively denote the outputs of a Trojaned model and a clean model. Let $q_{T}$ and $q_{C}$ denote probability distributions associated with $z_{T}$ and $z_{C}$. Then, at the optimal solution of the game in Eqn.~\eqref{eq:min-max-modified}, the output distributions coming from clean models and Trojaned models will be identical- i.e., $ q_{T} =  q_{C}$, thus allowing the adversary to successfully evade detection.
\end{prop}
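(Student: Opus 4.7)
The plan is to exploit the structural resemblance between the min-max problem in Eqn.~\eqref{eq:min-max-modified} and the classical GAN objective, treating the Trojaned model $f_{\theta_T}$ as the ``generator'' that induces the output distribution $q_T$ and treating $h_{\theta_D}$ as the ``discriminator'' attempting to distinguish $q_T$ from $q_C$. The two additional terms, $\mathbb{E}_{(x_T,y_T)\in\Tilde{\mathscr{D}}}\,\ell_{\theta_T}(x_T,y_T)$ and $\mathbb{E}_{(x,y)\in\mathscr{D}}\,\ell_{\theta_T}(x,y)$, depend only on $\theta_T$ (not on $\theta_D$) and will be handled separately via the degrees-of-freedom observations \textbf{A} and \textbf{B}.

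First, I would fix $\theta_T$ and solve the inner maximization over $\theta_D$. Writing the discriminator-dependent part of the objective as
\begin{equation*}
V(\theta_T,\theta_D) = \int \bigl[q_C(z)\log h_{\theta_D}(z) + q_T(z)\log(1-h_{\theta_D}(z))\bigr]\,dz,
\end{equation*}
a pointwise maximization of the integrand (using the elementary fact that $a\log t + b\log(1-t)$ is maximized at $t = a/(a+b)$ on $(0,1)$) yields the optimal discriminator
\begin{equation*}
h^*_{\theta_D}(z) \;=\; \frac{q_C(z)}{q_C(z)+q_T(z)}.
\end{equation*}
Here I would invoke observation \textbf{B}: since the detector has enough capacity within its parameterization (and in fact the adversary is computing this optimum offline via \emph{Line 7} of Algorithm~\ref{alg:minmax}), the best-response $h^*_{\theta_D}$ is attainable.

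Next, I would substitute $h^*_{\theta_D}$ back into $V$. A standard calculation rewrites
\begin{equation*}
V(\theta_T,h^*_{\theta_D}) \;=\; -\log 4 \;+\; 2\,\mathrm{JSD}\bigl(q_C\,\|\,q_T\bigr),
\end{equation*}
where $\mathrm{JSD}$ denotes the Jensen--Shannon divergence. Because $\mathrm{JSD}(q_C\|q_T)\ge 0$ with equality iff $q_T=q_C$, the outer minimization over $\theta_T$ of this term alone is driven toward the equality $q_T=q_C$, and is strictly decreasing in JSD.

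The main obstacle, and the step that requires the paper's key observation, is to argue that the remaining two loss terms do not obstruct this equality. Concretely, I would invoke observations \textbf{A} and \textbf{B}: the dimensionality of $\theta_T$ provides enough slack that the sub-level sets of the classification losses $\mathbb{E}\,\ell_{\theta_T}(x_T,y_T)$ and $\mathbb{E}\,\ell_{\theta_T}(x,y)$ intersect the manifold $\{\theta_T : q_T = q_C\}$. Equivalently, within the large parameter space of the Trojaned DNN there exist configurations achieving both (a) correct labeling on $\mathscr{D}$ and targeted labeling on $\Tilde{\mathscr{D}}$, and (b) $q_T = q_C$ on random inputs drawn from $\mathcal{N}(\mu,\Sigma)$. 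Since $\Tilde{\mathscr{D}}\cup\mathscr{D}$ is (essentially) disjoint from the support mass of $\mathcal{N}(\mu,\Sigma)$ in input space, the gradient directions that shape the outputs on random inputs are effectively decoupled from those that shape outputs on clean and trigger-embedded samples. Combining this with the JSD minimization above, the optimum of Eqn.~\eqref{eq:min-max-modified} satisfies $q_T=q_C$, at which point $h^*_{\theta_D}(z)\equiv \tfrac{1}{2}$ and the detector is reduced to random guessing, proving the claim.
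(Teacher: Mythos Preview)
Your proposal is correct and follows essentially the same GAN-style argument as the paper: fix $\theta_T$, compute the optimal discriminator $h_{\theta_D}^*(z)=q_C(z)/(q_C(z)+q_T(z))$, substitute back, and observe that the resulting divergence term is minimized at $q_T=q_C$. The only cosmetic difference is that you package the substituted value as $-\log 4 + 2\,\mathrm{JSD}(q_C\|q_T)$, whereas the paper writes it as the sum of two KL divergences to the midpoint $(q_T+q_C)/2$; these are the same object. Your treatment of the two classification-loss terms (invoking the degrees-of-freedom observations \textbf{A}, \textbf{B} and the disjointness of $\mathscr{D}_R$ from $\mathscr{D}\cup\Tilde{\mathscr{D}}$) is in fact more explicit than the paper's, which simply notes that the divergence terms depend only on $\mathscr{D}_R$ while the loss terms depend only on $\mathscr{D}$ and then concludes.
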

\begin{proof}
The detailed proof is presented in the Appendix.
\end{proof}

Proposition~\ref{prop:NE1} indicates that the output distributions of Trojaned ($q_{T}$) and clean models ($ q_{C}$) are identical at the optimal solution to the game. 
Therefore, at the end, the adversary successfully evades detection, which can be interpreted as winning the co-evolution game. 

 In Sec. 5, we present a greedy algorithm and show that when cross-entropy and/ or log-likelihood loss functions are used, 
 the algorithm gives a lower bound on the number of input samples to be selected for trigger-embedding. 
\section{Greedy Algorithm for Trojan Embedding}
\label{sec:Submod}

{\color{black} In this section, we discuss three components that affect selection of input samples for embedding Trojan triggers, namely (a) attack cost, (b) model integrity, and (c) stealth. We then present a greedy algorithm to select this subset of clean samples for trigger-embedding. 
\\
\noindent {\bf Attack Cost:} Embedding triggers into a large number of inputs significantly increases the adversary's operational costs. Minimizing these costs is beneficial, as it ensures the most efficient use of adversary's resources for attack \cite{wang2020trojan, bai2021datapoisoning, liu2017trojannet}.
\\
\noindent {\bf Model Integrity:} A large number of trigger-embedded samples can degrade the classification accuracy of the Trojaned ML model on clean input samples. Such degradation may lead to more scrutiny and possible early detection of the compromised model, potentially compromising the attack's effectiveness \cite{li2022poisoning, li2021effectiveness, wang2020stealthy}.
\\
\noindent {\bf Stealth:} An excessive number of trigger-embedded samples makes a Trojaned DNN easier to detect by advanced Trojan detection mechanisms \cite{xu2021detecting, wang2019neural, gao2019strip, guo2019tabor}. Careful selection of a few but adequate clean samples for trigger embedding is necessary to evade detection, thereby maintaining attack stealth and longevity.
\\

Based on the above discussion, the adversary's aim is to select the smallest possible subset of training data 
for embedding Trojan triggers. This selection is targeted at \textbf{(i)} achieving high accuracy in classifying data samples embedded with Trojan triggers into the adversary's desired class and \textbf{(ii)} maintaining high classification accuracy on clean data samples. In this context, the objective of evading detection by output-based Trojaned model detectors is not explicitly included, as embedding a smaller number of triggers inherently leads to increased stealthiness, as discussed at the beginning of this section. The overarching strategy is to utilize the minimal set of input samples chosen for trigger embedding in the MM Trojan algorithm (Algorithm~\ref{alg:minmax}) 
to circumvent detection.

We use the following notation to describe the process of input sample selection for trigger embedding. 
The set of clean samples is denoted by $\mathscr{D}$ and the subset of clean samples into which a trigger is embedded by $\tilde{\mathscr{D}}$. 
In order to evaluate the efficacy of the set of trigger-embedded inputs in training a Trojaned DNN, the adversary will use a test dataset, denoted  $\mathscr{D}_{\text{Test}}$, where $\mathscr{D}_{\text{Test}}$ contains samples that are not part of the training dataset. 
Additionally, the adversary will use a test set, denoted $\mathscr{\Tilde{D}}_{\text{Test}}$, for embedding a Trojan trigger. 
We use $\mathcal{L}_{T}(\mathscr{D},\mathscr{\Tilde{D}})$ and $\mathcal{L}_{C}(\mathscr{D},\mathscr{\Tilde{D}})$ to represent loss functions computed using  $\mathscr{\Tilde{D}}_{\text{Test}}$ and $\mathscr{D}_{\text{Test}}$, and $\mathcal{L}_{tot}(\mathscr{D},\mathscr{\Tilde{D}}):=\mathcal{L}_{T}(\mathscr{D},\mathscr{\Tilde{D}}) + \mathcal{L}_{C}(\mathscr{D},\mathscr{\Tilde{D}})$ to denote the (total) {\em test loss}. 
Then, determining the subset of input samples into which to embed a Trojan trigger can be characterized as: 
\begin{eqnarray}\label{eq:Total_loss}
&&\min_{\mathscr{\Tilde{D}} \subset \mathscr{D}} \mathcal{L}_{tot}(\mathscr{D},\mathscr{\Tilde{D}}).
\end{eqnarray}

Determining the subset of clean samples for trigger-embedding, $\mathscr{\Tilde{D}}$ that solves Eqn.~\eqref{eq:Total_loss} is a combinatorial problem. 
As a result, finding an optimal solution can become intractable as the size and complexity of datasets and model parameters increase. 
To address this challenge, our key insight is 
that the test-loss function in Eqn.~\eqref{eq:Total_loss} satisfies a diminishing returns, or \emph{supermodularity} property in $\tilde{\mathscr{D}}$~\cite{fujishige2005submodular}. 

In this context, the supermodular property suggests that the reduction in test loss achieved by the adversary when adding a new clean sample to a larger subset $\mathscr{\Tilde{D}}$ is less than the reduction achieved by adding the same clean sample to a smaller subset $\mathscr{\Tilde{D}}$. This property indicates diminishing returns in terms of test loss reduction for each additional sample added to $\mathscr{\Tilde{D}}$. Consequently, beyond a certain point, the adversary cannot significantly lower the total test loss function, or in other words, cannot further improve the model's performance in terms of objectives (i) and (ii), by simply adding more samples to the set $\mathscr{\Tilde{D}}$. This property is crucial in determining the optimal size and composition of $\mathscr{\Tilde{D}}$ for the most computatioanlly effective Trojan trigger embedding.

It has been demonstrated that test loss functions for input subset selection exhibit supermodularity when the underlying loss function is the cross-entropy function \cite{killamsetty2021glister}, which is frequently used in applications such as the image classification tasks we consider in this paper. We formally establish the supermodular property of $\mathcal{L}_{tot}(\mathscr{D},\mathscr{\Tilde{D}})$ in \textbf{Appendix~\ref{App:sub}}.

We present a \emph{Greedy Algorithm} in Algorithm~\ref{alg:greedy} that enables the adversary to constructively select samples for trigger embedding. The supermodularity of the total test loss enables our Greedy Algorithm to offer a {\bf $\mathbf{\frac{1}{2}}$-optimality guarantee} on the selection of the optimal set of input samples \cite{buchbinder2015tight}. The output of the \emph{Greedy Algorithm} (Algo. \ref{alg:greedy}) serves as input to the MM Trojan Algorithm (Algo. \ref{alg:minmax}). Integrating these algorithms in a sequential manner thereby enhances effectiveness of the Trojan embedding process.

\begin{algorithm}[h]
  \caption{Greedy Algorithm  
  \label{alg:greedy}}
  \begin{algorithmic}
  
\State \textit {\bf Input:} Training dataset available to adversary $\mathscr{D}$, ML model $f$, and a real-valued constant $0 \leq \epsilon < 1$.
\State \textit{\bf Output:} $\mathscr{\Tilde{D}} \subset \mathscr{D}$ to embed Trojan triggers.
\end{algorithmic}
\begin{algorithmic}[1]
\State Initialize $\mathscr{\Tilde{D}}  = \emptyset$ and flag = 1 \label{step:init}
\While {flag == 1}
\State $(x^{\*}, y^{\*}) \leftarrow \arg \max_{(x,y) \in \mathscr{D} } \mathcal{L}_{\text{tot}}(\mathscr{D},\mathscr{\Tilde{D}}\cup \{(x,y)\})$
\If{$\mathcal{L}_{\text{tot}}(\mathscr{D},\mathscr{\Tilde{D}}\cup \{(x^{\*}, y^{\*})\}) < (1-\epsilon)\mathcal{L}_{\text{tot}}(\mathscr{D},\mathscr{\Tilde{D}})$}
\State $\mathscr{\Tilde{D}} \leftarrow \mathscr{\Tilde{D}}\cup \{(x^{\*},y^{\*})\}$
\State $\mathscr{D} \leftarrow \mathscr{D} \setminus \{(x^{\*}, y^{\*})\}$
\Else
\State flag = 0
\EndIf
\EndWhile
\Return $\mathscr{\Tilde{D}}$ 
\end{algorithmic}
\end{algorithm}

\begin{rmk}\label{RemBatchProc}
The optimization problem in Line~3 of Algorithm~\ref{alg:greedy} requires training a model for each data point $(x,y) \in \mathscr{D}$ added into $\mathscr{\tilde{D}}$ when determining identities of $(x^{\*},y^{\*})$. 
Therefore, in the worst-case, at Line~3 there can be $O(|\mathscr{D}|)$ training instances. This process can be computationally exhaustive and take a lot of time. One heuristic to reduce computational complexity and run-time of Algorithm~\ref{alg:greedy} is to use batch-processing \cite{keskar2016large}. 
In such a scenario, the training data $\mathscr{D}$ will first be randomly partitioned into $N << |\mathscr{D}|$ non-overlapping groups. Then, the optimization procedure in Line~3 will be solved over these $N$ sets to identify which group needs to be added to the set $\mathscr{\tilde{D}}$. 
\end{rmk}

Algorithm~\ref{alg:greedy} follows a procedure inspired from~\cite{nemhauser1978analysis}. 
An alternative approach to reduce computational complexity is to adopt a linear-time double greedy algorithm, e.g., from \cite{buchbinder2015tight}. 
Our analysis is consistent with observations made in~\cite{brahma2021game, das2020think} that insertion of a backdoor into the model is achieved at a cost to the adversary, e.g., compromising model integrity and stealth. 
The supermodularity of $\mathcal{L}_{\text{tot}}(\mathscr{D},\mathscr{\Tilde{D}})$ 
is 
empirically verified through extensive experiments in Sec. \ref{sec:EvlnNew}. 





\section{Evaluation}\label{sec:EvlnNew}

This section introduces our experiment setup and details results of our evaluations of the MM Trojan algorithm described in Sec. \ref{sec:GameModel}. 
The objective of the adversary is to evade detection by an adaptive instance-based detection mechanism, while simultaneously ensuring high accuracy of classifying clean samples correctly and that of classifying Trojan samples to the target class. 
We also carry out experiments to evaluate performance of the Greedy algorithm described in Sec. \ref{sec:Submod}.

\subsection{Experiment Setup}

\subsubsection{Datasets}
We use four datasets: MNIST, CIFAR-10, CIFAR-100, and SpeechCommand. We briefly describe these datasets below:

\noindent{\bf MNIST:} This dataset contains 70,000 $28\times 28$ gray-scale images of hand-written digits; 60,000 are used for training and 10,000 for testing. The basic model is a two-layer CNN, each with $5\times 5$ kernels and channel sizes 16 and 32, followed by maxpooling and fully-connected layers of size 512.

\noindent{\bf CIFAR-10:} This dataset contains 60,000 $32\times 32$ RGB images of 10 different objects (e.g., car, horse). The basic model is a 4-layer convolution neural network, each containing $3\times 3$ kernels and channel sizes of 32, 32, 64 and 64, followed by maxpooling and two fully-connected layers, each of size 256.

\noindent{\bf CIFAR-100:} This dataset is similar to CIFAR-10, except it has 100 classes containing 600 images each. 

\noindent{\bf SpeechCommand (SC):} We use the 10-class SpeechCommand dataset version v0.02~\cite{warden2018speech}, which contains 30,769 training and 4,074 testing files of a one-second audio file on a specific command. The model first extracts the mel-spectrogram of input audio with 40 mel-bands and then processes it with a Long-Short-Term-Memory (LSTM) 
as in \cite{xu2021detecting}. 

\subsubsection{Trigger Setting}  
The Trojan trigger patterns are sampled in the same way as in \cite{xu2021detecting}. In particular, the trigger mask, pattern, transparency, poisoning ratio and malicious label of shadow models will be sampled from the jumbo distribution. For target models, we evaluate (i) modification attacks (M) and (ii) blending attacks (B). The former attack will always set transparency to be zero while the latter samples non-zero transparency values.

\subsubsection{Metrics}
We use five metrics to evaluate an adaptive adversary against four SOTA output-based Trojaned model detectors: MNTD~\cite{xu2021detecting}, NeuralCleanse~\cite{wang2019neural}, STRIP~\cite{gao2019strip}, and TABOR~\cite{guo2019tabor}. 
Two metrics are used to evaluate performance of the Trojan model on the original tasks ($Acc$ and $ASR$), and three metrics are used to evaluate how effectively the Trojan model evades the detectors ($AUC_0$, $AUC_{t-1}$, $AUC_t$).

\underline{\emph{Benign (Clean Sample) Accuracy ($Acc$):}} is the fraction of clean samples classified correctly by the model at test-time. 
\begin{align}
Acc := \frac{\# \text{ of samples in }\mathscr{D}_{test} \text{ classified correctly}}{\# \text{ of samples in }\mathscr{D}_{test}}, 
\end{align}
where $\mathscr{D}_{test}$ is a test set containing only clean samples. 

\underline{\emph{Attack Success Rate ($ASR$):}} 
is the fraction of trigger-embedded samples classified by the model to the target class. 
\begin{align}
    ASR = \frac{\# \text{ of samples in }\mathscr{\Tilde{D}}_{\text{Test}} \text{ classified to }y_T}{\# \text{ of samples in }\mathscr{\Tilde{D}}_{\text{Test}}}, 
\end{align}
where $\mathscr{\Tilde{D}}_{\text{Test}}$ is got by inserting a trigger into samples in $\mathscr{D}_{test}$. 

The above quantity is sometimes termed an \emph{attack success rate (ASR)} in the machine learning literature~\cite{gao2020backdoor}. 
However, we use the notation $ASR$ to avoid ambiguity with the objective of the adversary to evade detection. 

\begin{table*}[!h]
\caption{
    Comparison between MM Trojan and baseline Trojan against the state-of-the-art MNTD detection for $t=20$ iterations for four datasets- MNIST, CIFAR-10, CIFAR-100, SpeechCommand. M represents the modification attack and B represents the blending attack. The $AUC_t$ represents the detection performance at step $t$ (smaller value indicates that the attack better evades the detection). The baseline Trojan does not have values on $AUC_{t-1}$ because it does not include an iterative process. We can observe that the MM Trojan can successfully evade the detection when the adaptive adversary takes the last step ($AUC_0$ and $AUC_{t-1}$); even when the defender takes the last step, we can still see a significant drop in the detection performance ($AUC_t$).
    }
    \label{tab:atk-mntd}
    \centering
    \begin{tabular}{c|c|c|c|c|c|c|c}
        \toprule
        Dataset & Trojan & Attack & $Acc$ & $ASR$ & $AUC_0$ ($\downarrow$) & $AUC_{t-1}$ ($\downarrow$) & $AUC_t$ ($\downarrow$) \\
        \midrule
        \multirow{4}{*}{MNIST} & \multirow{2}{*}{M} & Baseline Trojan & 0.9835 & 0.9982 & 0.9980 & - & 0.9980 \\
        \cmidrule{3-8}
         &  & MM Trojan & 0.9825 & 0.9969 & \bf 0.0 & 0.0 & \bf 0.6230 \\
        \cmidrule{2-8}
        & \multirow{2}{*}{B} & Baseline Trojan & 0.9813 & 0.9928 & 1.0 & - & 1.0 \\
        \cmidrule{3-8}
         &  & MM Trojan & 0.9813 & 0.9957 & \bf 0.0 & 0.0 & \bf 0.8310 \\
        \midrule
        \multirow{4}{*}{CIFAR-10} & \multirow{2}{*}{M} & Baseline Trojan & 0.6127 & 0.9998 & 0.9775 & - & 0.9775 \\
        \cmidrule{3-8}
         &  & MM Trojan & 0.6005 & 0.9998 & \bf 0.0 & 0.0 & \bf 0.8828 \\
        \cmidrule{2-8}
        & \multirow{2}{*}{B} & Baseline Trojan & 0.5929 & 0.9600 & 0.9658 & - & 0.9658 \\
        \cmidrule{3-8}
         &  & MM Trojan & 0.5897 & 0.8864 & \bf 0.0 & 0.0 & \bf 0.8779 \\
        \midrule
        \multirow{4}{*}{CIFAR-100} & \multirow{2}{*}{M} & Baseline Trojan & 0.4827 & 0.9988 & 0.9863 & - & 0.9863 \\
        \cmidrule{3-8}
         &  & MM Trojan & 0.4468 & 0.9409 & \bf 0.1563 & 0.0 & \bf 0.7813 \\
        \cmidrule{2-8}
        & \multirow{2}{*}{B} & Baseline Trojan & 0.4839 & 0.9992 & 0.9375 & - & 0.9375 \\
        \cmidrule{3-8}
         &  & MM Trojan & 0.4312 & 0.9108 & \bf 0.0 & 0.0 & \bf 0.8662 \\
        \midrule
        \multirow{4}{*}{SpeechCommand} & \multirow{2}{*}{M} & Baseline Trojan & 0.8330 & 0.9827 & 0.9941 & - & 0.9941 \\
        \cmidrule{3-8}
         &  & MM Trojan & 0.7481 & 0.9520 & \bf 0.1904 & 0.0361 & \bf 0.2765 \\
        \cmidrule{2-8}
        & \multirow{2}{*}{B} & Baseline Trojan & 0.8313 & 0.9911 & 0.9990 & - & 0.9990 \\
        \cmidrule{3-8}
         &  & MM Trojan & 0.7411 & 0.9703 & \bf 0.0215 & 0.0 & \bf 0.7529 \\
        \bottomrule
    \end{tabular}
\end{table*}

\underline{\emph{Detection rates ($AUC_0, AUC_{t-1}, AUC_t$):}}
The Area Under the Curve (AUC) 
in the context of a Receiver Operating Characteristic (ROC) curve is a measure of the ability of a classifier to distinguish between classes. The ROC curve is a graphical representation of true positive rate (TPR) against false positive rate (FPR) at various threshold settings. An AUC of 1.0 represents a perfect detector
An AUC of 0.5 represents a detector that performs no better than random chance.

Given the benign models $\{f_{\theta_T^i}\}_{i=1}^K$ and Trojan models $\{f_{\theta_T^i}\}_{i=1}^K$ generated in the $i$-th iteration, we will evaluate the detection AUC of different detector models on this binary classification/ outlier detection task. 
In particular, we will evaluate the AUC of (1) the vanilla detector model, denoted $AUC_0$; (2) the detectot model trained in the $(t-1)$-th iteration, denoted $AUC_{t-1}$; (3) the detector model trained in the $t$-th iteration, denoted $AUC_{t}$. 
We use $t=20$ in all our experiments. 

The quantity $AUC_{t-1}$ indicates attack performance when the adversary takes the last step, while $AUC_{t}$ indicates attack performance when the defender takes the last step. A smaller $AUC$ value indicates better attack performance. For the Baseline Trojan without multiple iterations, 
$AUC_t=AUC_0$ and there is no value for $AUC_{t-1}$.

\subsubsection{Trojan detection mechanisms}\label{subsec:MNTD}

Trojan detection mechanisms aim to identify if outputs corresponding to a set of random inputs are coming from a clean or Trojan model. 
The Trojan model detector in this paper is a binary classifier learned using multiple outputs from clean and Trojan models. 
For the purpose of experiments, SOTA Trojan detection schemes we evaluate can be broadly categorized into two classes, described below. 

The first class of detectors iteratively trains a set of clean models with same structure as target model and a set of Trojan models for different triggers and desired outputs. 
The Trojan detector then learns a discriminator on the output of these models for random image inputs. 
Such discriminator was shown to be able to achieve high accuracy in detecting Trojan models. 
MNTD~\cite{xu2021detecting} is the SOTA Trojan detector of this type.  

The second class of detectors use a `single-step' approach to determine whether a given model is Trojan or not. 
We evaluate three SOTA Trojan detectors of this type: NeuralCleanse~\cite{wang2019neural}, STRIP~\cite{gao2019strip}, and TABOR~\cite{guo2019tabor}. 
%
NeuralCleanse~\cite{wang2019neural} seeks to compute the minimum magnitude of perturbation required to change the output prediction of a model into each class. 
The given model is recognized as Trojan if the minimum magnitude of perturbation for some class is lower than a threshold value. 
STRIP~\cite{gao2019strip} aims to evaluate whether the confidence of the model on certain inputs will be maintained when these inputs will be made noisy by mixing with other inputs. If prediction of the model remain unchanged, then STRIP observes that it is likely that the input sample has been embedded with a Trojan trigger. 
Consequently, STRIP concludes that the model is Trojan. 
TABOR~\cite{guo2019tabor} is an extension of NeuralCleanse~\cite{li2021neural} and uses optimization regularization inspired by explainable AI techniques and heuristics. TABOR then uses a quality measure to identify candidate triggers, followed by an anomaly detection procedure to reduce false alarms. 

\begin{table*}[!h]
    \caption{
    Comparison between MM Trojan and baseline Trojan attacks against four SOTA Trojan detection mechanisms (MNTD~\cite{xu2021detecting}, NeuralCleanse~\cite{wang2019neural}, STRIP~\cite{gao2019strip}, and TABOR~\cite{guo2019tabor}). M represents modification attack and B represents blending attack. We use all-to-all attack in MM Trojan, as the adversary takes the last step to bypass the different detection approaches. We observe that in most settings, the MM Trojan successfully evades all four detection methods, while the baseline Trojan will fail.
    }
    \centering
    \begin{tabular}{c|c|c|c|c|c|c}
        \toprule
        Dataset & Trojan & Attack & MNTD-AUC ($\downarrow$) & NC-AUC ($\downarrow$) & STRIP-AUC ($\downarrow$) & TABOR-AUC ($\downarrow$)\\
        \midrule
        \multirow{4}{*}{MNIST} & \multirow{2}{*}{M} & Baseline Trojan & 0.9980 & 0.8750 & 0.8379 & 0.8305\\
        \cmidrule{3-7}
         &  & MM Trojan & \bf 0.0 & \bf 0.5586 & \bf 0.3674 & \bf 0.5703\\
        \cmidrule{2-7}
        & \multirow{2}{*}{B} & Baseline Trojan & 1.0 & 0.8359 & 0.6406 & 0.7188 \\
        \cmidrule{3-7}
         &  & MM Trojan  & \bf 0.0 & \bf 0.5469 & \bf 0.4205 & \bf 0.5508\\
        \midrule
        \multirow{4}{*}{CIFAR-10} & \multirow{2}{*}{M} & Baseline Trojan  & 0.9775 & 0.5859 & 0.8332 & 0.7344\\
        \cmidrule{3-7}
         &  & MM Trojan  & \bf 0.0 & \bf 0.5016 & \bf 0.5828 & \bf 0.5562\\
        \cmidrule{2-7}
        & \multirow{2}{*}{B} & Baseline Trojan  & 0.9658 & 0.6289 & 0.7141 & 0.6719\\
        \cmidrule{3-7}
         &  & MM Trojan & \bf 0.0 & \bf 0.5172 & \bf 0.4225 & \bf 0.6344\\
        \midrule
        \multirow{4}{*}{CIFAR-100} & \multirow{2}{*}{M} & Baseline Trojan & 0.9863 & 0.5754 & 0.7422 & 0.6783 \\
        \cmidrule{3-7}
         &  & MM Trojan& \bf 0.0625 & \bf 0.5188 & \bf 0.4055 & \bf 0.5413 \\
        \cmidrule{2-7}
        & \multirow{2}{*}{B} & Baseline Trojan & 0.9375 & 0.5713 & 0.7031 & 0.7154 \\
        \cmidrule{3-7}
         &  & MM Trojan & \bf 0.0 & \bf 0.5031 & \bf 0.5920 & \bf 0.5235 \\
        \midrule
        \multirow{4}{*}{SpCom} & \multirow{2}{*}{M} & Baseline Trojan & 0.9941 & 0.7070 & 0.7514 & 0.7273\\
        \cmidrule{3-7}
         &  & MM Trojan & \bf 0.0156 & \bf 0.6055 & \bf 0.2187 & \bf 0.6600 \\
        \cmidrule{2-7}
        & \multirow{2}{*}{B} & Baseline Trojan & 0.9990 & 0.9297 & 0.9188 & 0.9143\\
        \cmidrule{3-7}
         &  & MM Trojan & \bf 0.0 & \bf 0.6680 & \bf 0.3748 & \bf0.6667\\
        \bottomrule
    \end{tabular}
    \label{tab:atk-other}
\end{table*}

\subsection{MM Trojan vs Iterative Detectors}\label{subsec:EvalMNTD}

Table~\ref{tab:atk-mntd} shows the attack performance of \emph{MM Trojan} and a comparison with the baseline Trojan on the MNIST, CIFAR-10, CIFAR-100, and SpeechCommand datasets. We observe that MM Trojan successfully evades adaptive Trojan detection models. 
We use MNTD as an instance of iterative Trojan detection mechanism \cite{xu2021detecting}. 
We report results for both, modification (M) and blending (B) attacks. The model trained with MM Trojan can evade the initial MNTD detector into a zero detection AUC on vision-based tasks, and close-to-zero performance on the SpeechCommand task (values of $AUC_0$ in Table~\ref{tab:atk-mntd}). For an adaptive adversary, using MM Trojan results in a zero or close-to-zero detection AUC at step $t-1$ (values of $AUC_{t-1}$ in Table~\ref{tab:atk-mntd}). 
Moreover, when the defender takes the last step, we can still see a significant drop in detection AUC ($AUC_t$ values in Table~\ref{tab:atk-mntd}). This reveals that even when the defender has full knowledge of our attack, the defending performance is still significantly affected by the attack. 

We also observe a trade-off in that our MM Trojan will, in some cases, have lower benign and backdoor accuracies, especially on relatively complicated tasks such (e.g., SpeechCommand). We believe that tuning parameters $\theta_D$ and $\theta_T$ will control the trade-off. Developing strategies to mitigate drop in accuracy is an interesting direction of future research. 

\subsection{MM Trojan vs Other Detectors}\label{subsec:EvalOther}

We additionally evaluate the effectiveness of an MM Trojan model in evading other Trojan detection approaches. Since MM Trojan does not change the training dataset but only the model training algorithm, it only affects model-based Trojan detection algorithms. Therefore, we perform experiments on three model-based detection algorithms: NeuralCleanse~\cite{wang2019neural}, STRIP~\cite{gao2019strip}, and TABOR~\cite{guo2019tabor}. In order to evade these detection methods, we will change the attack goal of the Trojan to be the ``\emph{all-to-all}'' attack, i.e., the prediction will be changed from the $i$-th class to the $((i+1)\%K)$-th class, where $K$ is the total number of classes. 
Our approach is guided by an insight observed in~\cite{xu2021detecting} which has indicated that 
model-based detection algorithms are not designed to detect all-to-all attacks. 
Since MM Trojans are designed for an adaptive adversary that uses gradient-based algorithms, we believe that it will also be effective against NeuralCleanse or STRIP. We seek to show that MM Trojan can be flexibly combined with other techniques, so that it can also bypass other detection models.
Table~\ref{tab:atk-other} shows results of the MM \emph{all-to-all} Trojan. We observe that with the all-to-all attack goal integrated into MM Trojan, we can indeed achieve good performance in evading multiple detectors. 
The AUC values of NeuralCleanse, STRIP, and TABOR drop to under or close to $0.5$ (equivalent to a random guess) in all cases. 
In comparison, the AUC value of MNTD is close to zero because of the adversarial attack in the MM Trojan process. 
We acknowledge that as our approach does not involve gradient-based attacks against NeuralCleanse, STRIP, or TABOR, we anticipate that these models will maintain a performance above close-to-zero detection rates.

The results in this section on three SOTA Trojan detectors indicate that success of an adversary following MM Trojan is largely agnostic to the detection mechanism. This suggests that MM Trojan can be effective against future Trojan detection mechanisms that adaptively update their parameters when working against an adaptive adversary. 

\begin{figure*}[!h] 
\begin{subfigure}{0.24\textwidth}
\includegraphics[width=\linewidth]{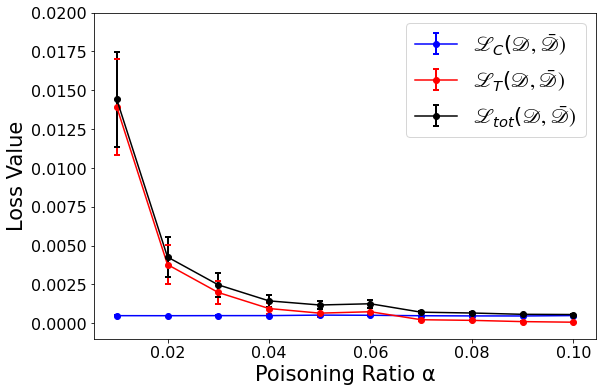}
\caption{MNIST Trojan-M} \label{fig:a}
\end{subfigure}
\begin{subfigure}{0.24\textwidth}
\includegraphics[width=\linewidth]{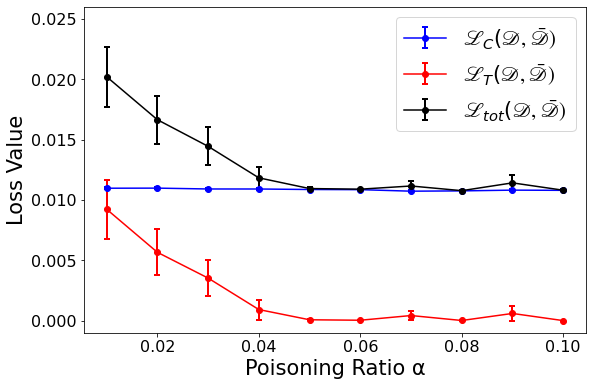}
\caption{CIFAR-10 Trojan-M} \label{fig:a}
\end{subfigure}
\begin{subfigure}{0.24\textwidth}
\includegraphics[width=\linewidth]{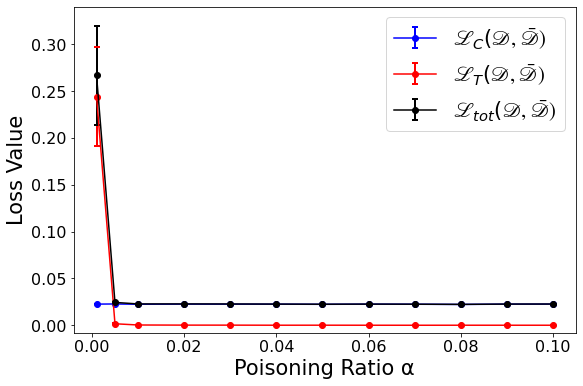}
\caption{CIFAR-100 Trojan-M} \label{fig:b}
\end{subfigure}
\begin{subfigure}{0.24\textwidth}
\includegraphics[width=\linewidth]{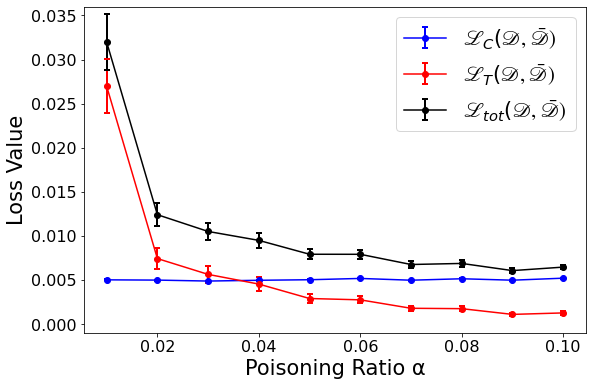}
\caption{SpeechCommand Trojan-M} \label{fig:b}
\end{subfigure}

\medskip
\begin{subfigure}{0.24\textwidth}
\includegraphics[width=\linewidth]{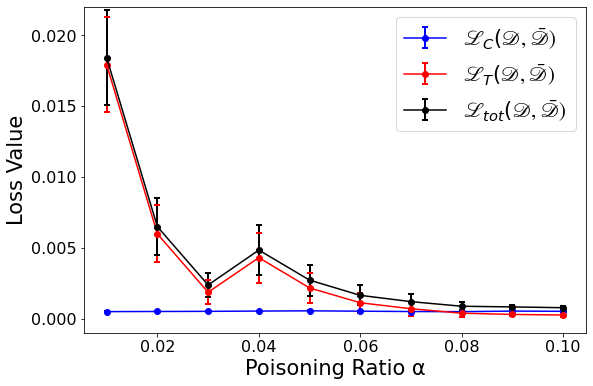}
\caption{MNIST Trojan-B} \label{fig:a}
\end{subfigure}
\begin{subfigure}{0.24\textwidth}
\includegraphics[width=\linewidth]{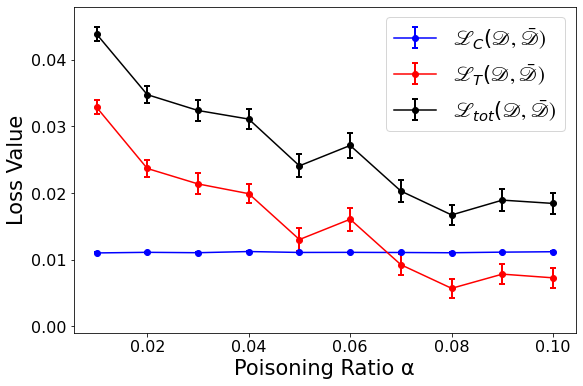}
\caption{CIFAR-10 Trojan-B} \label{fig:c}
\end{subfigure}
\begin{subfigure}{0.24\textwidth}
\includegraphics[width=\linewidth]{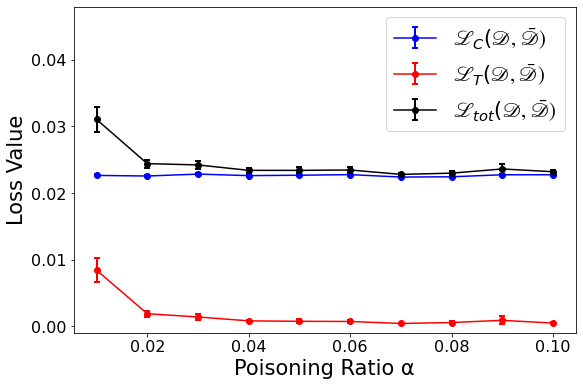}
\caption{CIFAR-100 Trojan-B} \label{fig:d}
\end{subfigure}
\begin{subfigure}{0.24\textwidth}
\includegraphics[width=\linewidth]{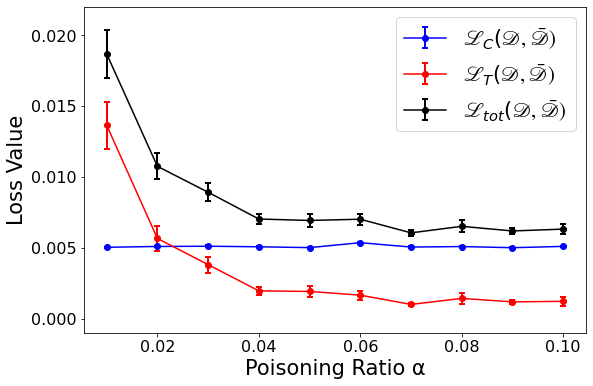}
\caption{SpeechCommand Trojan-B} \label{fig:b}
\end{subfigure}
\caption{
Adversary test-loss as a function of the fraction of data samples into which a Trojan trigger has been embedded on MNIST, CIFAR-10, CIFAR-100, and SpeechCommand datasets under modification (M) and blending attacks (B). We denote the fraction of samples into which the adversary embeds a Trojan trigger by $\alpha:=|\Tilde{\mathscr{D}}|/|\mathscr{D}|$. \textcolor{blue}{Blue} curves show the loss term over clean samples, $\mathcal{L}_{C}(\mathscr{D},\mathscr{\Tilde{D}})$, and \textcolor{red}{red} curves show the loss term over samples that have been embedded with the Trojan, $\mathcal{L}_{T}(\mathscr{D},\mathscr{\Tilde{D}})$. The black curves present the total loss $\mathcal{L}_{tot}(\mathscr{D},\mathscr{\Tilde{D}})$ as defined in Eqn. \eqref{eq:Total_loss}. We observe that for both attacks, the decrease in the total loss value, $\mathcal{L}_{tot}(\mathscr{D},\mathscr{\Tilde{D}})$,  is insignificant after exceeding a threshold value of $\alpha$. This indicates that the adversary does not derive an additional (marginal) benefit by increasing the fraction of Trojan trigger-embedded samples beyond a threshold value of $\alpha$.  
}
\label{fig:subloss}
\end{figure*}

\subsection{Verification of Greedy Trojan}\label{subsec:EvalSubmod}

\begin{figure*}[!h] 
\begin{subfigure}{0.24\textwidth}
\includegraphics[width=\linewidth]{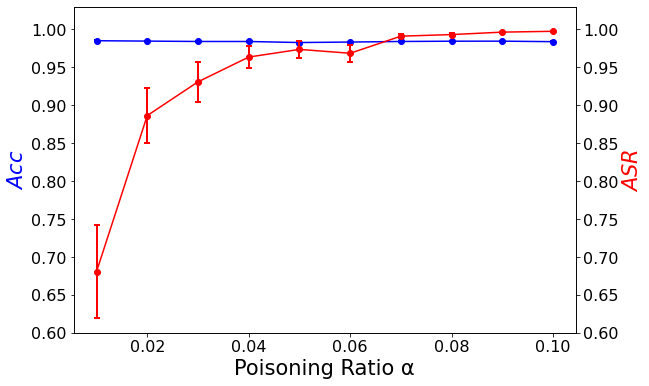}
\caption{MNIST Trojan-M} \label{fig:a}
\end{subfigure}
\begin{subfigure}{0.24\textwidth}
\includegraphics[width=\linewidth]{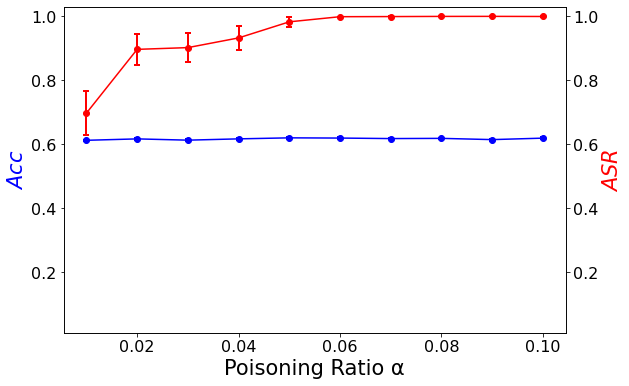}
\caption{CIFAR-10 Trojan-M} \label{fig:a}
\end{subfigure}
\begin{subfigure}{0.24\textwidth}
\includegraphics[width=\linewidth]{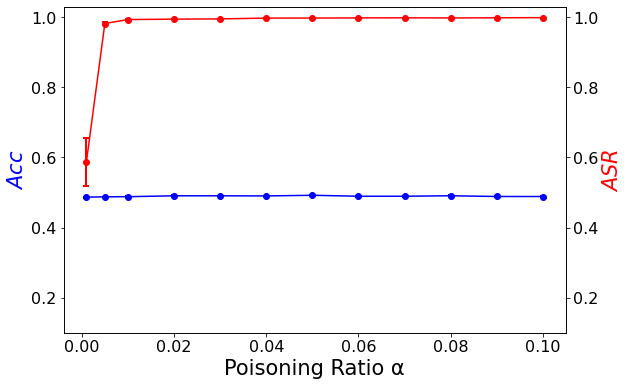}
\caption{CIFAR-100 Trojan-M} \label{fig:b}
\end{subfigure}
\begin{subfigure}{0.24\textwidth}
\includegraphics[width=\linewidth]{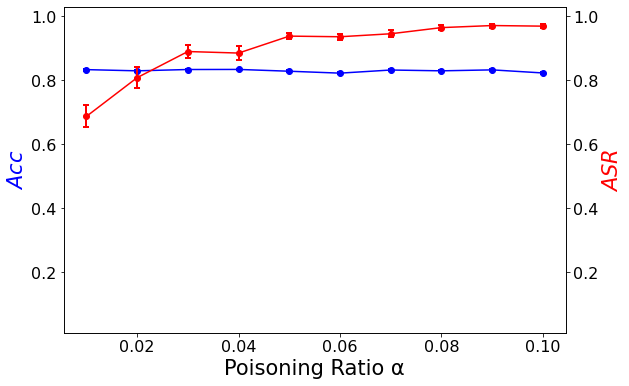}
\caption{SpeechCommand Trojan-M} \label{fig:b}
\end{subfigure}

\medskip
\begin{subfigure}{0.24\textwidth}
\includegraphics[width=\linewidth]{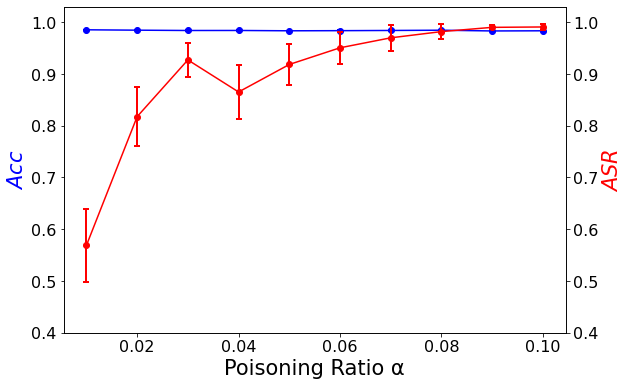}
\caption{MNIST Trojan-B} \label{fig:a}
\end{subfigure}
\begin{subfigure}{0.24\textwidth}
\includegraphics[width=\linewidth]{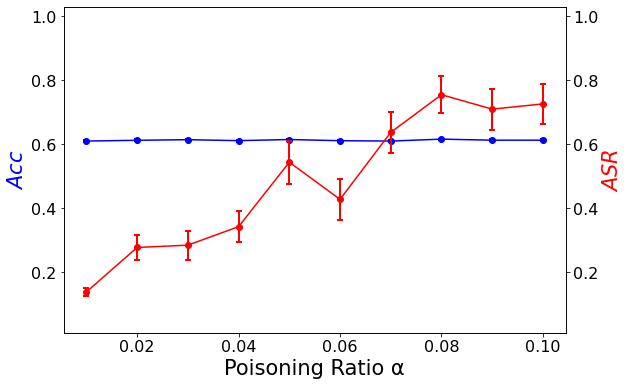}
\caption{CIFAR-10 Trojan-B} \label{fig:c}
\end{subfigure}
\begin{subfigure}{0.24\textwidth}
\includegraphics[width=\linewidth]{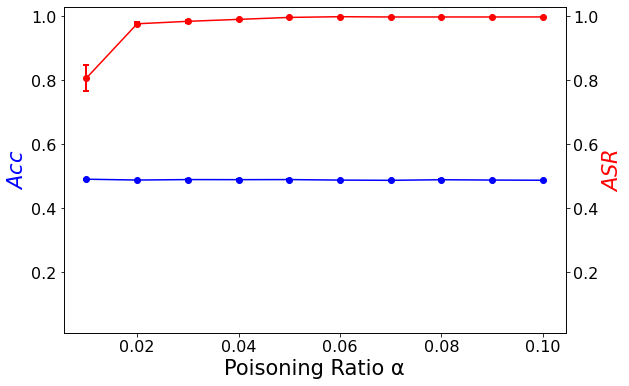}
\caption{CIFAR-100 Trojan-B} \label{fig:d}
\end{subfigure}
\begin{subfigure}{0.24\textwidth}
\includegraphics[width=\linewidth]{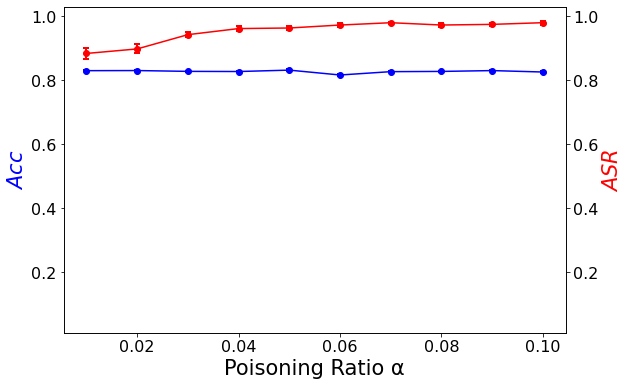}
\caption{SpeechCommand Trojan-B} \label{fig:b}
\end{subfigure}
\caption{Classification accuracies over clean samples ($Acc$, \textcolor{blue}{blue}) and samples into which the adversary embeds a Trojan trigger ($ASR$, \textcolor{red}{red})  using the MM Trojan algorithm on MNIST, CIFAR-10, CIFAR-100, and SpeechCommand datasets under modification (M) and blending (B) attacks. The increase in $ASR$ becomes less significant beyond a threshold value of the poisoning ratio, $\alpha:=|\Tilde{\mathscr{D}}|/|\mathscr{D}|$. On the other hand, the value of $Acc$ may decrease as $\alpha$ increases. This could result in a user discarding the model. 
}
\label{fig:subacc}
\end{figure*}

\begin{table*}[!h]
    \caption{
    Ablation study of MM Trojan with different poisoning ratio ($\alpha$) against the state-of-the-art MNTD detection. M represents the modification attack and B represents the blending attack.  We can observe that under all the settings, a minimum threshold of $\alpha=0.05$ can achieve a good Trojan attack performance ($Acc$ and $ASR$) while also evading MNTD detection.
    }
    \label{tab:abb}
    \centering
    \begin{tabular}{c|c|c|c|c|c|c}
        \toprule
        Dataset & Trojan & Metric & $\alpha = 0.05$ & $\alpha = 0.1$ & $\alpha = 0.25$ & $\alpha = 0.5$ \\
        \midrule
        \multirow{4}{*}{MNIST} & \multirow{2}{*}{M} & $Acc$ & 0.9828 & 0.9821  & 0.9819 & 0.9812   \\
        \cmidrule{3-7}
         &  & $ASR$ & 0.9746 & 0.9951  &  0.9977  & 0.9995   \\
        \cmidrule{3-7}
         &  & AUC ($\downarrow$) &  \bf 0.0 &  \bf 0.0 &  \bf 0.0 &  \bf 0.0   \\
        \cmidrule{2-7}
        & \multirow{2}{*}{B} & $Acc$ & 0.9826   & 0.9819  & 0.9813 & 0.9798  \\
        \cmidrule{3-7}
         &  & $ASR$ & 0.9580 & 0.9943 & 0.9970  & 0.9975   \\
         \cmidrule{3-7}
         &  & AUC ($\downarrow$) &  \bf 0.0 &  \bf 0.0 & \bf 0.0 &  \bf 0.0   \\
        \midrule
        \multirow{4}{*}{CIFAR-10} & \multirow{2}{*}{M} & $Acc$ & 0.6015 & 0.5888   & 0.5870  & 0.5818   \\
        \cmidrule{3-7}
         &  & $ASR$ & 0.9987  & 0.9998  & 0.9999  & 1.0    \\
         \cmidrule{3-7}
         &  & AUC ($\downarrow$) &  \bf 0.0 &  \bf 0.0 & \bf 0.0 &  \bf 0.0  \\
        \cmidrule{2-7}
        & \multirow{2}{*}{B} & $Acc$ & 0.6150  & 0.5948  &  0.5905 & 0.5875  \\
        \cmidrule{3-7}
         &  & $ASR$ & 0.6186  & 0.8700  & 0.9368  & 0.9994    \\
         \cmidrule{3-7}
         &  & AUC ($\downarrow$) &  \bf 0.0 &  \bf 0.0 & \bf  \bf 0.0 &  \bf 0.0   \\
        \midrule
        \multirow{4}{*}{CIFAR-100} & \multirow{2}{*}{M} & $Acc$ & 0.4887   & 0.4881  & 0.4867  & 0.4838    \\
        \cmidrule{3-7}
         &  & $ASR$ & 0.9931 & 0.9987   & 0.9998    & 1.0   \\
         \cmidrule{3-7}
         &  & AUC ($\downarrow$) & \bf 0.0089 & 0.0356   & 0.0581  &  0.0982  \\
        \cmidrule{2-7}
        & \multirow{2}{*}{B} & $Acc$ & 0.4892  & 0.0.4871  & 0.4838  & 0.4798   \\
        \cmidrule{3-7}
         &  & $ASR$ & 0.8065  & 0.9978  & 0.9993  & 1.0  \\
         \cmidrule{3-7}
         &  & AUC ($\downarrow$) & \bf 0  & \bf 0  & 0.0001  & 0.0024   \\
        \midrule
        \multirow{4}{*}{SpeechCommand} & \multirow{2}{*}{M} & $Acc$ & 0.7638  & 0.7479   & 0.7069  &  0.6982  \\
        \cmidrule{3-7}
         &  & $ASR$ & 0.7839  & 0.9270 & 0.9443 &  0.9951  \\
         \cmidrule{3-7}
         &  & AUC ($\downarrow$) & \bf 0.0127 & 0.0264  & 0.0286   & 0.0345   \\
        \cmidrule{2-7}
        & \multirow{2}{*}{B} & $Acc$ & 0.7596  & 0.7591  & 0.7256  & 0.7091   \\
        \cmidrule{3-7}
         &  & $ASR$ & 0.9382  & 0.9427  & 0.9554  & 0.9585    \\
         \cmidrule{3-7}
         &  & AUC ($\downarrow$) & \bf 0.0022 & 0.0028  & 0.0415 &  0.0600   \\
        \bottomrule
    \end{tabular}
\end{table*}

Fig.~\ref{fig:subloss} and \ref{fig:subacc} present experimental evaluations of the Greedy Trojan Algorithm on the  MNIST, CIFAR-10, CIFAR-100, and SpeechCommand datasets for two different types of attacks- a modification attack (M) and a blending attack (B). 
We define the poisoning ratio $\alpha$ to be the fraction of samples into which the adversary embeds a Trojan trigger, i.e., $\alpha:=|\Tilde{\mathscr{D}}|/|\mathscr{D}|$. 
In all experiments, we initialize $\alpha=0.002$. 
In each iteration, we train a model to minimize the loss function in Eqn. (\ref{eq:Total_loss}). 
This loss function consists of two terms- one over clean samples and another over Trojan samples. 
We increase the size of the set of samples, $\Tilde{\mathscr{D}}$, into which a Trojan trigger is embedded by the adversary as long as the addition of such samples decreases the value of the loss function (Line 4 of Algo \ref{alg:greedy}). 
The procedure terminates when adding more samples to $\Tilde{\mathscr{D}}$ does not result in a significant decrease in the loss function. 

Fig.~\ref{fig:subloss} shows the change in loss values as the value of $\alpha$ increases. 
 In each case, we observe that the loss term corresponding to Trojan samples ($\mathcal{L}_{T}(\mathscr{D},\mathscr{\Tilde{D}})$, red curve) initially decreases as $\alpha$ increases. 
 However, after exceeding a threshold, the decrease in the total loss value ($\mathcal{L}_{tot}(\mathscr{D},\mathscr{\Tilde{D}})$, black curve) is insignificant. 
 The loss term corresponding to clean samples ($\mathcal{L}_{C}(\mathscr{D},\mathscr{\Tilde{D}})$, blue curve) might increase as fraction of clean samples decreases with increase in $\alpha$. 
 This demonstrates that the adversary has little advantage in poisoning a fraction of samples larger than a threshold value. 
 
Fig.~\ref{fig:subacc} shows classification accuracies for clean samples ($Acc$, blue curve) and Trojan samples ($ASR$, red curve) as $\alpha$ is increased.
 For small values of $\alpha$, $ASR$ is small, but this metric improves as $\alpha$ is increased. 
 The improvement becomes less significant after a threshold of $\alpha$ is reached. 
 Simultaneously, the value of $Acc$ decreases slightly, which indicates that the model might become less useful to a user, leading them to discard it. Thus poisoning a large fraction of samples will not be advantageous to the adversary. 
 
 The outcomes of our experiments is consistent with the analysis presented in Section \ref{sec:Submod} in that the test-loss function of the adversary exhibits diminishing returns as the number of samples chosen by the adversary to be embedded with the Trojan trigger increases. 
 Further, these results show that the fraction of samples into which the adversary needs to embed a Trojan trigger can be determined in a constructive manner. 
 We observe that the adversary needs to embed a Trojan into only about $4\%$ of samples under the modification attack and about $6\%$ of samples under the blending attack. 
 Increasing the poisoning ratio beyond these thresholds will not significantly reduce the value of the total loss, $\mathcal{L}_{tot}(\mathscr{D},\mathscr{\Tilde{D}})$, in Eqn. (\ref{eq:Total_loss}). 

%

\subsection{Ablation Studies}

We conduct ablations to evaluate the effect of poisoning ratio $\alpha$ on performance of the MM Trojan algorithm using MNIST, CIFAR-10, CIFAR-100, and SpeechCommand datasets. 
For an adversary carrying out two different types of attacks to embed a Trojan trigger- modification (M) and blending attack (B)- we consider four values: $\alpha=\{0.05,0.1,0.25,0.5\}$, and examine benign accuracy $(Acc)$, backdoor accuracy $(ASR)$, and detection rate of MNTD ($AUC$). 

Table~\ref{tab:abb} shows that for both types of attacks, as the poisoning ratio $\alpha$ is increased, the value of $Acc$ decreases while the value of $ASR$ increases. 
For the MNIST and CIFAR-10 datasets, the $AUC$ values of MNTD are zero, indicating that the adversary successfully evades detection even when the fraction of poisoned samples is high. 
We also observe that the adversary needs to maintain a poisoning ratio of at least $0.1$ in order to achieve high backdoor accuracy values for these two datasets. 
On the other hand, for the SpeechCommand dataset, there is a tradeoff between achieving high backdoor accuracy $ASR$ and low $AUC$ values, indicating that the poisoning ratio has a greater impact. 
However, for all four datasets, we observe that there is a threshold value of $\alpha$, beyond which there is a only a marginal change in the values of $Acc$ and $ASR$. 

The results in Table~\ref{tab:abb} further underscore our insight that a minimum threshold for $\alpha$ can achieve acceptable values of $Acc$ and $ASR$ using the Greedy Algorithm (Algo.~\ref{alg:greedy}) and also aids the adversary in effectively evading detection using the MM Trojan algorithm (Algo. \ref{alg:minmax}). 

\section{Discussion}\label{sec:Discussion}
In this section, we provide insights that underscore our results 
and briefly describe open questions that are promising directions for future research. 

\noindent 
\underline{\textbf{Performance of MM Trojan:}} 
Our MM Trojan algorithm (Algorithm~\ref{alg:minmax}) jointly optimizes parameters associated with the Trojan model ($\theta_T$) and detection mechanism ($\theta_D$) in order for an adaptive adversary to simultaneously evade detection and achieve high classification accuracy. The adaptive defender aims to maximize the probability of detecting Trojan models based on their outputs. We model this interaction as an iterated game and in Proposition~\ref{prop:NE1} we show solving this game results in the adversary successfully evading detection. Specifically, the solution $\theta_T^{\*}$ and $\theta_D^{\*}$ to the game forms a Nash equilibrium under which neither adversary nor defender can do better by unilaterally deviating from their respective solutions. Therefore, as long as the adversary uses the value of $\theta_T^{\*}$ output by the MM Trojan algorithm, any other defender (including a `static' defender) will not be able to deviate from $\theta_D^{\*}$ and achieve better performance. 

\noindent 
\underline{\textbf{Training overhead:}} 
The number of hours of training required to generate shadow models and train meta models for $t=20$ iterations are: $MNIST: 67.2$; $CIFAR: 78.1$; $SpeechCommand: 46.4$. Batch processing is one way to reduce the training overhead, as noted in Remark \ref{RemBatchProc}.

\noindent 
\underline{\textbf{Information exclusively controlled by adversary:}}  
Fig. \ref{fig:GAN-Train-AI-Trojan} indicates that in order to train Trojaned models, the adversary uses (i) data samples embedded with different types of Trojan triggers and (ii) feedback from the output of the detector. 
On the other hand, training the detector requires outputs of Trojaned models and outputs from clean models. 
Thus, the adversary has control over a subset of information (corresponding to outputs of the Trojaned models) required to train the detector. 
Further, 
to avoid imbalanced data, an equal number of outputs of data samples from clean  models and of samples from Trojaned models are used to train the detector. 
The detector on the other hand, makes use of output data from clean and Trojaned DNNs, and does not generate any additional data of it own.

These observations are consistent with our
experimental results in Sec. \ref{subsec:EvalMNTD} and \ref{subsec:EvalOther} which show that an adaptive adversary following the MM Trojan Algorithm is able to successfully evade detection by multiple different SOTA Trojan detectors. 

\section{Related Work}
\label{sec:relatedwork}
We place our contributions in this paper in the context of related work in backdoor attacks on ML models, detecting Trojan models, and mitigating the impact of backdoors.  

\noindent 
\textbf{\underline{Trojan Attacks.}}
The authors of~\cite{ji2018transferlearningtrojan} showed that untrustworthy entities might maintain pre-trained ML models, consequently resulting in severe security implications to model users. 
They introduced a class of \emph{model-reuse} attacks, wherein a host system running this model could be induced to predictably misbehave on specific inputs. 
A notion of \emph{latent backdoors} was introduced in~\cite{yao2019latent}, which are backdoors that are preserved during transfer learning. 
As a result, if a model obtained after transfer learning includes the target label of the backdoor, it can be activated by an appropriate input. 
This work also showed that defending against latent backdoors involved a tradeoff between cost and accuracy. 

Model or data poisoning by an adversary has also been shown to be an effective attack strategy. 
An algorithm in~\cite{rakin2020tbt} efficiently identified vulnerable bits of model weights stored in memory. 
Flipping these bits transformed a deployed DNN model into a Trojan model. 
Concurrently, targeted weight perturbations were used to embed backdoors into convolutional neural networks deployed for face recognition in~\cite{dumford2020backdooring}. 
In contrast to model poisoning methods described above, the training set of an ML model was poisoned in~\cite{shafahi2018poison}. 
In this case, the adversary had the ability to control the outcome of classification for the poisoned data, while the model's performance was maintained for `clean' samples. 

In \cite{salem2022dynamic}, the authors propose a backdoor trigger-embedding approach where an intended trigger is first partitioned into smaller triggers. 
Each smaller trigger is embedded into a unique input sample such that the union of the trigger-embedded data covers the intended trigger. 
The results in \cite{salem2022dynamic} also show that the resulting Trojaned DNN leads to effective mixing of trigger-embedded and clean input samples in the classification space making it difficult for any SOTA distance-based detection mechanisms. 
In \cite{salem2022dynamic}, it was noted that at least 30\% trigger-embedded data is required to ensure high accuracy on both clean and trigger-embedded input samples. 


\noindent 
\textbf{\underline{Detecting Trigger-Embedded Inputs.}} 
An intuitive defense against data poisoning attacks is to develop techniques to remove suspicious samples from the training data. 
Backdoor attacks were shown to leave a distinct \emph{spectral signature} in covariance of feature representations learned by DNNs  in~\cite{tran2018spectral}. 
The spectral signature, in addition to detecting corrupted training examples, also presents a barrier to crafting and design of backdoors to ML models. 
Gradients of loss function at the input layer of a DNN were used to extract signatures to identify and eliminate poisoned data in~\cite{chan2019poison} even when the target class and ratio of poisoned samples were unknown. 
{
Recent work in \cite{liu2023detecting} and \cite{guo2023scale} employ corruption robustness consistency and prediction consistency-based measures to detect trigger-embedded input samples during inference. Latent separation based methods for detecting Trojan-trigger embedded input samples, aiming to learn separable latent representations for Trojaned and clean inputs were proposed in \cite{tang2021demon, hayase2021spectre}. However these approaches which focus on input-based or latent separability (i.e., model inspection)-based detection of Trojan trigger-embedded input samples, are not within the scope of our adaptive adversary model. Our objective is to evade output-based Trojaned model detectors, which aim to identify whether a given model is Trojaned without requiring trigger-embedded Trojan input samples for detection. 
}

\noindent 
\textbf{\underline{Detecting Trojaned Models.}}
When the ML model has an embedded Trojan, it is natural to design methods to detect such models by identifying potential triggers. 
A method to detect and reverse-engineer triggers in DNNs called \emph{NeuralCleanse} was proposed in~\cite{wang2019neural}. 
\emph{NeuralCleanse} devised an optimization scheme to determine the smallest trigger required to misclassify all samples from all labels to the target label. 
However, the target label may not be known, and the requirement to consider all labels as a potential target label makes this computationally expensive. 
An improvement on \emph{NeuralCleanse} called \emph{TABOR}~\cite{guo2019tabor} used a combination of heuristics and regularization techniques to reduce false positives in Trojan detection. 
Differences in explanations of outputs of a clean model and a Trojan model, even on clean samples, were used to identify Trojan models in~\cite{huang2019neuroninspect}. 
This was achieved without requiring access to samples containing a trigger. 

A Trojan detection method called \emph{DeepInspect}~\cite{chen2019deepinspect} used conditional generative models to learn distributions of potential triggers while having only black-box access to the deployed ML model and without requiring access to clean training data. 
To overcome a limitation of~\cite{chen2019deepinspect} (and of~\cite{wang2019neural}) that there is exactly one target label, the authors of~\cite{xu2021detecting} proposed Meta-Neural Trojan Detection (MNTD). 
MNTD only required black-box access to models and trained a meta-classifier to identify if the model was Trojan or not. 
Multiple shadow models (combination of clean and Trojan models) were generated and their representations were used to learn a binary classifier. 
At test-time, the representation of the target model was provided to the classifier to determine whether it was Trojan or not. 
MNTD was demonstrated to have a high success rate when evaluated on a diverse range of datasets, including images, speech, and NLP. 

\noindent 
\textbf{\underline{Mitigating Backdoors.}}
Once an embedded Trojan has been detected, a question arises if methodologies can be developed to remove or reduce the impact of a backdoor, while maintaining model performance on clean inputs. 
The \emph{Fine-pruning} method in~\cite{liu2018fine} proposes to deactivate neurons that are not enabled by clean inputs. It then uses a tuning procedure to restore some of the deactivated neurons to mitigate reduction in classification accuracy due to pruning. 
A run-time Trojan attack detection system called STRIP was developed in~\cite{gao2019strip}. 
The \emph{input-agnostic} property of a trigger was exploited as a weakness of Trojan attacks. 
This method was independent of model architecture and size of trigger. 
A survey of backdoor attacks and defenses against such attacks can be found in~\cite{gao2020backdoor}. 



\section{Conclusion}
\label{sec:Conclusion}

In this paper, we investigated the detection performance of SOTA output-based Trojaned model detectors against an adaptive adversary who has the knowledge of the deployment of such detectors and evolves its strategies to bypass detection. 

Such an adaptive adversary incorporates detector parameter information to retrain the Trojaned DNN to (1) achieve high accuracy on both Trojan trigger-embedded and clean input samples and (2) bypass detection. 
By allowing detectors to also be adaptive, 
we showed that co-evolution of adversary and detector parameters could be modeled by an iterative game. 
We proved that the solution of this game resulted in the adversary accomplishing objectives (1) and (2). 
We also used a greedy algorithm to allow the adversary to select 
input samples to embed a Trojan trigger. 
When cross-entropy or log-likelihood loss functions were used for training, the greedy algorithm resulted in a provable lower bound on the number of samples to be selected for trigger-embedding. %

Extensive evaluations on MNIST, CIFAR-10, CIFAR-100, and SpeechCommand showed that the adaptive adversary effectively evaded four leading SOTA output-based Trojaned model detectors: MNTD~\cite{xu2021detecting}, NeuralCleanse~\cite{wang2019neural}, STRIP~\cite{gao2019strip}, and TABOR~\cite{guo2019tabor}. 

Our results highlight the need for new, advanced output-based Trojaned model detectors against adaptive adversaries. 

\section*{Acknowledgments}

This material is based upon work supported by the National Science
Foundation under grant IIS 2229876 and is supported in part by
funds provided by the National Science Foundation (NSF), by the
Department of Homeland Security, and by IBM. Any opinions,
findings, and conclusions or recommendations expressed in this
material are those of the author(s) and do not necessarily reflect
the views of the National Science Foundation or its federal agency
and industry partners. This work is also supported by the Air Force
Office of Scientific Research (AFOSR) through grant FA9550-23-1-
0208, the Office of Naval Research (ONR) through grant N00014-23-
1-2386, and the NSF through grant CNS 2153136.

\bibliographystyle{splncs04}      
\bibliography{GameSec2022_references}
\appendix
\section{Appendix}
This Appendix presents detailed proofs of our result in Proposition \ref{prop:NE1} and supermodularity of the test loss (Theorem \ref{thm:supermodular}). 

\subsection{Proof of Proposition~\ref{prop:NE1}}
Proposition \ref{prop:NE1} provided a characterization of the solution of the min-max optimization problem presented in Eqn. (\ref{eq:min-max-modified}). We show that output distributions of Trojaned and clean DNNs will be identical at the optimal solution, and can be interpreted as the adversary winning the co-evolution game by successfully evading detection. 
%
We restate Prop. \ref{prop:NE1} below. 

\noindent 
\textbf{Proposition \ref{prop:NE1}}: 
\emph{For a random input data sample $x \in \mathscr{D}_{R}$, let $z_{T} := f_{\theta_T}(x)$  and $z_{C} := f_{\theta_C}(x)$ respectively denote the outputs of a Trojaned model and a clean model. Let $q_{T}$ and $q_{C}$ denote probability distributions associated with $z_{T}$ and $z_{C}$. Then, at the optimal solution of the game in Eqn.~\eqref{eq:min-max-modified}, the output distributions coming from clean models and Trojaned models will be identical- i.e., $ q_{T} =  q_{C}$, thus allowing the adversary to successfully evade detection.}

\begin{proof}
Eqn. (\ref{eq:min-max-modified}) consists of four terms: 
\begin{align*}
    \hspace{-5mm} \min_{\theta_T} \max_{\theta_{D}} \hspace{1mm} &\mathbb{E}_{z_{T} \sim q_{T}} [\log (1-h_{\theta_D}(z_{T}))]    \\&+  \mathbb{E}_{z_{C} \sim q_{C}} [\log (h_{\theta_D}(z_{C}))]  \\&+
    \mathbb{E}_{x \sim {p}} [\ell_{\theta_T} (x \times (1-\Delta)+ \delta \times \Delta, y_T) + \ell_{\theta_T}(x, y_C)], 
\end{align*}
where $x_T:=x \times (1-\Delta)+ \delta \times \Delta$ denotes a trigger-embedded input sample. 
For a fixed $\theta_T$, let $V_{\text{max}}$ denote the solution to the 
maximization part of the min-max problem in Eqn.~\eqref{eq:min-max-modified}. We have:
\begin{eqnarray}\label{eq:Dval}
V_{\text{max}} =     &&\hspace{-5mm}
    \mathbb{E}_{z_{T} \sim {q}_{T}} [\log (1-h_{\theta_D}(z_T))] + \mathbb{E}_{z_{C} \sim {q}_{C}} [\log (h_{\theta_D}(z_C))] 
    \nonumber \\
    &\hspace{-9.5mm}=&\hspace{-5mm}
    \mathbb{E}_{z \sim {q}_{T}} [\log (1-h_{\theta_D}(z))] + \mathbb{E}_{z \sim {q}_{C}} [\log (h_{\theta_D}(z))] 
    \nonumber \\
    &\hspace{-9.5mm}=&\hspace{-5mm}
    \hspace{-1mm}\int_{z} {q}_T(z)  \log (1-h_{\theta_D}(z)) dz\nonumber + \hspace{-1.5mm}\int_{z} {q}_C(z) \log (h_{\theta_D}(z)) dz
    , \nonumber 
\end{eqnarray}
where the first two equalities hold by variable substitution, and the last equality holds by the definition of expectation.

To complete solving the maximization sub-problem, we observe that for any $(a, b) \in \mathbb{R}^{2} \backslash\{0,0\}$, the function $x \rightarrow a\log(1-x) + b\log(x)$ achieves its maximum in $[0,1]$ at $\frac{b}{a+b}$. 
Therefore, for a fixed $\theta_T$, we obtain 
\begin{align*}
    h_{\theta_D^{\*}}(z) = \frac{{q}_C(z)}{{q}_T(z)+ {q}_C(z)}. 
\end{align*}
With the above value of $h_{\theta_D^{\*}}(z)$, let $V_{\text{min}}$ denote the solution to the 
minimization part of the min-max optimization problem in Eqn.~\eqref{eq:min-max-modified}. We have 
\begin{align}
    V_{\text{min}}&=\int_{z} {q}_T(z)  \log \Big(1-\frac{{q}_C(z)}{{q}_T(z)+{q}_C(z)}\Big) dz \nonumber \\
    &\quad + \int_{z} {q}_C(z) \log \Big(\frac{{q}_C(z)}{{q}_T(z)+{q}_C(z)}\Big) dz \nonumber \\
    &\quad +\mathbb{E}_{x \sim {p}} [\ell_{\theta_T} (x \times (1-\Delta)+ \delta \times \Delta, y_T) + \ell_{\theta_T}(x, y_C)] \nonumber \\
    &= \int_{z} {q}_T(z)  \log \Big(\frac{{q}_T(z)}{{q}_T(z)+{q}_C(z)}\Big) dz \nonumber \\ &
    \quad + \int_{z} {q}_C(z) \log \Big(\frac{{q}_C(z)}{{q}_T(z)+{p}_C(z)}\Big) dz 
    \label{Eq:IntermediateStep} \\
    &\quad +
    \mathbb{E}_{x \sim {p}} [\ell_{\theta_T} (x \times (1-\Delta)+ \delta \times \Delta, y_T) + \ell_{\theta_T}(x, y_C)] \nonumber 
\end{align}
The first two terms in Eqn.~\eqref{Eq:IntermediateStep} can be interpreted as KL divergences between probability distributions $q_T$ and $q_C$. 
The KL divergence between two probability distributions $p_1$ and $p_2$ is defined by $KL(p_1||p_2) := \int_{-\infty}^{+\infty}p_1(x)\log\Big(\frac{p_1(x)}{p_2(x)}\Big)dx$ \cite{cover1999elements}. Thus, $\int_{-\infty}^{+\infty}p_1(x)\log\Big(\frac{p_1(x)}{p_1(x)+p_2(x)}\Big)dx \propto  KL(p_1||(p_1+p_2)/2)$. Here, $(p_1+p_2)/2$ is used to ensure validity of the probability distribution such that $\int_{x} (p_1(x)+p_2(x))/2) = 1$. 
Substituting into Eqn. (\ref{Eq:IntermediateStep}), we obtain
\begin{align}
V_{\text{min}} = &\text{KL}\Big({q}_{T} || \frac{{q}_{T} + {q}_{C}}{2}\Big) + \text{KL}\Big({q}_{C} || \frac{{q}_{T} + {q}_{C}}{2}\Big)
\label{eq:Gval1} \\
    &+
    \mathbb{E}_{x \sim {p}} [\ell_{\theta_T} (x \times (1-\Delta)+ \delta \times \Delta, y_T) + \ell_{\theta_T}(x, y_C)]. \nonumber
\end{align}
Recall from Sec. \ref{sec:GameModel} that $q_T$ and $q_C$ are output distributions from clean and Trojaned DNNs when inputs $x$ to the models are generated from a multi-variate Gaussian distribution $\mathcal{N}$.
Hence the first two terms in Eqn.~\eqref{eq:Gval1} depend only on input samples from the set $\mathscr{D}_R = \{x~|~x \sim \mathcal{N}\}$ while the last two terms depend only on input samples from $\mathscr{D} = \{(x,y)~|~(x,y) \sim p\}$.




The proof follows by observing that the first two terms in Eqn. (\ref{eq:Gval1}) become zero when $q_T =\frac{q_{T} + q_{C}}{2}$, i.e., $q_T = q_C$.
\end{proof}

\subsection{Analysis of Test Loss Function}\label{App:sub}
In this section, we prove that when the loss function for selecting input samples for trigger-embedding is the cross-entropy loss, the adversary's test loss function satisfies a diminishing returns or supermodularity property. 
We first present a formal definition of supermodularity.

\begin{defn}[\cite{fujishige2005submodular}]\label{defn:sub}
Given a finite set $\Omega$, a function $g: 2^{\Omega} \rightarrow \mathbb{R}$ is submodular if 
$g( V \cup \{v\}) - g(V) \geq g(U \cup \{v\}) - g(U)$ 
for any $V \subseteq U \subseteq \Omega$ and any $v \in \Omega \setminus U$.
Function $g$ is said to be supermodular if $-g$ is submodular.
\end{defn}

Given two sets $V$ and $U$ such that $V \subseteq U \subseteq \Omega$, Definition \ref{defn:sub} indicates that the marginal increase in the value of the function $g$ is higher when an element $v \in \Omega \setminus U$ is added to a smaller subset $V$ than when the same sample is added to a larger subset $U \supseteq V$. 
For any $U,V\subseteq\Omega$, we say a function $g: 2^{\Omega} \rightarrow \mathbb{R}$ is weakly submodular if $\sum_{u\in U}(g(V\cup \{u\})-g(V))\geq \gamma (g(V\cup U)-g(V))$ for some $\gamma\in(0,1]$.
To show such supermodularity property for the test-loss function, we first present a result from \cite{killamsetty2021glister} which holds for a class of commonly used loss-functions.

\begin{lem}[\cite{killamsetty2021glister}]\label{lemma:prelim}
    When $\ell_{\theta}$ is chosen as negative logistic loss, negative squared loss, negative hinge loss, perceptron loss, the data subset selection problem is an instance of submodular maximization problem. When $\ell_{\theta}$ is the negative cross entropy loss, the data subset selection problem is an instance of weakly submodular maximization problem.
\end{lem}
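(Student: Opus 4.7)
The plan is to follow the GLISTER-style analysis used in \cite{killamsetty2021glister}, which converts the bilevel data subset selection problem into a set-function maximization amenable to submodularity arguments. First I would fix notation: let $\Omega$ be the full labeled pool, let $\theta_0$ be some reference parameter vector, and for a candidate subset $S \subseteq \Omega$ define the one-step trained parameter $\theta_S := \theta_0 - \eta \sum_{(x,y) \in S} \nabla_\theta \ell_\theta(x,y)\big|_{\theta_0}$. The set function of interest is $g(S) := -\mathcal{L}_{\text{val}}(\theta_S)$, i.e., the negative validation loss evaluated at the one-step update. The subset selection problem becomes $\max_{S \subseteq \Omega,\,|S| \le k} g(S)$, and the lemma reduces to establishing submodularity (respectively, weak submodularity) of $g$ under the stated choices of $\ell_\theta$.

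Next I would Taylor-expand $\mathcal{L}_{\text{val}}(\theta_S)$ about $\theta_0$. The first-order term is linear in the indicator vector of $S$ and hence modular, so it contributes neither to nor against submodularity. The second-order term takes the form $\tfrac{1}{2}\eta^2 \bigl(\sum_{i \in S} g_i\bigr)^\top H \bigl(\sum_{i \in S} g_i\bigr)$, where $g_i = \nabla_\theta \ell_\theta(x_i,y_i)$ and $H$ is the Hessian of the validation loss at $\theta_0$. For negative logistic, negative squared, negative hinge, and perceptron losses, the structure of $g_i$ and the resulting Hessian can be shown to yield a quadratic form whose induced set function is submodular; the key calculation is to verify the diminishing-returns inequality $g(V \cup \{v\}) - g(V) \ge g(U \cup \{v\}) - g(U)$ for $V \subseteq U$, which reduces to a sign condition on cross-terms $g_v^\top H g_u$ that holds uniformly for these loss families because their gradients admit a rank-one outer-product representation with consistent sign.

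For negative cross-entropy the same approach does not yield exact submodularity, because the per-sample gradients involve the softmax output and the cross-term signs need not align. Here I would instead establish weak submodularity: bound $\sum_{u \in U}\bigl(g(V \cup \{u\}) - g(V)\bigr) \ge \gamma \bigl(g(V \cup U) - g(V)\bigr)$ for some $\gamma \in (0,1]$ by controlling the spectrum of the validation-loss Hessian. Using smoothness of cross-entropy (bounded Lipschitz gradient) together with a lower bound on the curvature restricted to the span of the selected gradients, the ratio $\gamma$ can be expressed in terms of the restricted condition number of $H$. This recovers the weak submodularity claim.

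The main obstacle I anticipate is the cross-entropy case: unlike the other four losses, there is no clean sign structure to fall back on, so the weak submodularity constant has to be tracked through the restricted eigenvalue analysis of the Hessian. Verifying that $\gamma > 0$ uniformly requires showing the gradients $\{g_i\}$ do not degenerate along directions of zero curvature, which is the subtle step that prevents extending the submodular argument verbatim. Everything else amounts to Taylor-expansion bookkeeping and pointwise verification of Definition~\ref{defn:sub}.
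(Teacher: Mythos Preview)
The paper does not prove this lemma at all: it is stated verbatim as a result imported from \cite{killamsetty2021glister} and used as a black box to motivate Lemma~\ref{lemma:supermodular} and the greedy guarantee. There is therefore no in-paper argument to compare your proposal against; anything you write here goes strictly beyond what the authors supply.

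That said, your sketch is broadly in the spirit of the GLISTER derivation, but one step is loose. After your Taylor expansion, submodularity of the second-order term $-\tfrac{1}{2}\eta^2\bigl(\sum_{i\in S} g_i\bigr)^{\!\top} H \bigl(\sum_{i\in S} g_i\bigr)$ reduces, as you note, to the sign condition $g_v^{\top} H g_u \ge 0$ for all pairs $u,v$. Your justification that the four listed losses ``admit a rank-one outer-product representation with consistent sign'' does not by itself deliver this: the per-sample gradient for, say, logistic loss on a linear model is $g_i = -y_i \sigma(-y_i\theta^{\top}x_i)\,x_i$, and the cross-term $g_v^{\top} H g_u$ can take either sign depending on $x_u,x_v$ and the labels. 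The GLISTER argument instead exploits the concave-over-modular structure of the \emph{validation} objective directly (the negative validation loss is concave in $\theta$, and $\theta_S$ is modular in $S$), which is what actually yields submodularity without needing pairwise sign control. Your weak-submodularity route for cross-entropy via a restricted condition number of $H$ is reasonable and close to how the cited reference handles it, though again the mechanism is concavity/curvature of the outer objective rather than sign alignment of training gradients. If you want to include a self-contained proof, I would recast the submodular case through the concave-over-modular lemma rather than the second-order cross-term argument.
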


We recall notation from Sec. \ref{sec:Submod}. 
The set of clean samples is denoted by $\mathscr{D}$ and the subset of clean samples into which a trigger is embedded by $\tilde{\mathscr{D}}$. 
In order to evaluate the efficacy of the set of trigger-embedded inputs in training a Trojaned DNN, the adversary will use a test dataset, denoted  $\mathscr{D}_{\text{Test}}$, where $\mathscr{D}_{\text{Test}}$ contains samples that are not part of the training dataset. 
Additionally, the adversary will use a test set, denoted $\mathscr{\Tilde{D}}_{\text{Test}}$, for embedding a Trojan trigger. 
We use $\mathcal{L}_{T}(\mathscr{D},\mathscr{\Tilde{D}})$ and $\mathcal{L}_{C}(\mathscr{D},\mathscr{\Tilde{D}})$ to represent loss functions computed using  $\mathscr{\Tilde{D}}_{\text{Test}}$ and $\mathscr{D}_{\text{Test}}$, and $\mathcal{L}_{tot}(\mathscr{D},\mathscr{\Tilde{D}}):=\mathcal{L}_{T}(\mathscr{D},\mathscr{\Tilde{D}}) + \mathcal{L}_{C}(\mathscr{D},\mathscr{\Tilde{D}})$ to denote the (total) {\em test loss}. 
Then, determining the subset of input samples into which to embed a Trojan trigger can be characterized as (also Eqn. (\ref{eq:Total_loss})): 
\begin{eqnarray*}
    &&\min_{\mathscr{\Tilde{D}} \subset \mathscr{D}} \mathcal{L}_{T}(\mathscr{D},\mathscr{\Tilde{D}}) + \mathcal{L}_{C}(\mathscr{D},\mathscr{\Tilde{D}}).
\end{eqnarray*}
%
In our case, the subset of data of interest is the subset of samples into which a trigger will be embedded, and the loss function will be $\mathcal{L}_{T}(\mathscr{D}, \mathscr{\Tilde{D}})$. Based on this choice of subset and loss function, we present a variant of Lemma \ref{lemma:prelim} below. 
\begin{lem}\label{lemma:supermodular}
    When $\ell_{\theta}$ is chosen as logistic loss, squared loss, hinge loss, perceptron loss, or cross entropy loss, the test loss-function term associated with predicting trigger-embedded samples to the adversary-desired target class, $\mathcal{L}_{T}(\mathscr{D}, \mathscr{\Tilde{D}})$ is (weakly) supermodular in the trigger-embedded dataset, $\mathscr{\Tilde{D}}$.
\end{lem}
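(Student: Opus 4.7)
The plan is to reduce the claim directly to Lemma \ref{lemma:prelim} by exploiting the elementary duality from Definition \ref{defn:sub}: a set function $g$ is supermodular if and only if $-g$ is submodular. Since Lemma \ref{lemma:prelim} establishes (weak) submodularity for the \emph{negative} versions of the listed losses, the corresponding \emph{positive} losses will automatically be (weakly) supermodular in the selected subset $\mathscr{\Tilde{D}}$, and this negation trick is essentially the whole content of the argument.

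First, I would recast the trigger-embedding selection problem as an instance of the classical data subset selection problem of \cite{killamsetty2021glister}. Concretely, for each sample $(x,y) \in \mathscr{D}$ the adversary has a binary choice: leave it untouched, or replace it by its trigger-embedded counterpart $(x_T, y_T)$ and include that modified pair in the selection $\mathscr{\Tilde{D}}$. Training the Trojaned DNN on the resulting corpus produces parameters $\theta_T$, and the test-loss term $\mathcal{L}_T(\mathscr{D}, \mathscr{\Tilde{D}})$ evaluated over $\mathscr{\Tilde{D}}_{\text{Test}}$ measures how well the trained model maps trigger-embedded test inputs to the target class $y_T$. Under this identification, choosing $\mathscr{\Tilde{D}} \subseteq \mathscr{D}$ is precisely a data subset selection problem with per-sample loss $\ell_{\theta}$ evaluated on the $(x_T, y_T)$ pairs induced by $\mathscr{\Tilde{D}}$.

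Next, I would apply Lemma \ref{lemma:prelim} to the function $\mathscr{\Tilde{D}} \mapsto -\mathcal{L}_T(\mathscr{D}, \mathscr{\Tilde{D}})$ over the ground set $\mathscr{D}$. For $\ell_{\theta}$ equal to logistic, squared, hinge, or perceptron loss, Lemma \ref{lemma:prelim} yields that $-\mathcal{L}_T$ is submodular in $\mathscr{\Tilde{D}}$; for cross-entropy loss it is weakly submodular. Invoking Definition \ref{defn:sub} then converts each of these statements into the desired supermodularity (respectively, weak supermodularity) of $\mathcal{L}_T(\mathscr{D}, \mathscr{\Tilde{D}})$ by reversing the inequality upon multiplying by $-1$, with the weakly supermodular case obtained symmetrically by absorbing the constant $\gamma \in (0,1]$ from the weak submodularity definition stated just below Definition \ref{defn:sub}.

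The main technical obstacle is ensuring that the reduction to the Killamsetty et al. framework is faithful. Their formulation selects a subset of training samples to \emph{keep} while maximizing a hold-out objective, whereas here the adversary selects a subset of clean samples to \emph{modify} via trigger injection before training. Provided the trained parameters depend on $\mathscr{\Tilde{D}}$ only through the collection of injected $(x_T, y_T)$ pairs, and provided the test set $\mathscr{\Tilde{D}}_{\text{Test}}$ is fixed independently of $\mathscr{\Tilde{D}}$ so that marginal contributions $\mathcal{L}_T(\mathscr{D}, \mathscr{\Tilde{D}} \cup \{v\}) - \mathcal{L}_T(\mathscr{D}, \mathscr{\Tilde{D}})$ are well-defined, the two problems share identical combinatorial structure and Lemma \ref{lemma:prelim} transfers verbatim. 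A minor bookkeeping item is that the notion of weak supermodularity is not stated explicitly in the preliminaries; I would adopt it as the mirror of the weak submodularity inequality, which keeps the chain of implications clean.
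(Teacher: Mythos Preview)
Your proposal is correct and matches the paper's approach: the paper does not give a standalone proof of Lemma~\ref{lemma:supermodular} but simply presents it as a variant of Lemma~\ref{lemma:prelim} obtained by taking the subset of interest to be the trigger-embedded samples and the loss to be $\mathcal{L}_T$. Your explicit use of the negation duality from Definition~\ref{defn:sub} and the recasting of trigger selection as a data subset selection problem is exactly the intended reduction, just spelled out in more detail than the paper itself provides.
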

Although Lemma \ref{lemma:supermodular} indicates that $\mathcal{L}_{T}(\mathscr{D}, \mathscr{\Tilde{D}})$ is (weakly) supermodular in the trigger-embedded dataset, $\mathscr{\Tilde{D}}$, it does not imply (weak) supermodularity of $\mathcal{L}_{tot}(\mathscr{D}, \mathscr{\Tilde{D}})$.
In the following, we show that under certain conditions, the total test-loss function in Eqn.~\eqref{eq:Total_loss} is supermodular.
\begin{theorem}\label{thm:supermodular}
Consider two datasets $\mathscr{\Tilde{D}}_1$ and $\mathscr{\Tilde{D}}_2$, where $\mathscr{\Tilde{D}}_1\subset \mathscr{\Tilde{D}}_2$, $|\mathscr{\Tilde{D}}|_1=s$, $|\mathscr{\Tilde{D}}|_2=k$, and $k>s>0$.
Let $m_C$ and $m_T$ be defined as
\begin{align*}
    m_C^s &= \mathcal{L}_C(\mathscr{D}, \mathscr{\Tilde{D}}_1) - \mathcal{L}_C(\mathscr{D}, \mathscr{\Tilde{D}}_1\cup\{(x,y)\}),\\
    m_C^k &= \mathcal{L}_C(\mathscr{D}, \mathscr{\Tilde{D}}_2) - \mathcal{L}_C(\mathscr{D}, \mathscr{\Tilde{D}}_2\cup\{(x,y)\}),\\
    m_T^s &= \mathcal{L}_T(\mathscr{D}, \mathscr{\Tilde{D}}_1) - \mathcal{L}_T(\mathscr{D}, \mathscr{\Tilde{D}}_1\cup\{(x,y)\}),\\
    m_T^k &= \mathcal{L}_T(\mathscr{D}, \mathscr{\Tilde{D}}_2) - \mathcal{L}_T(\mathscr{D}, \mathscr{\Tilde{D}}_2\cup\{(x,y)\}),
\end{align*}
where $\{(x,y)\}\notin \mathscr{\Tilde{D}}_2$.
If $m_C^s+m_T^s\geq m_C^k+m_T^k$ for all $k>s>0$ and $\{(x,y)\}\notin \mathscr{\Tilde{D}}_2$, then the total test-loss function in Eqn. (\ref{eq:Total_loss}), $\mathcal{L}_{\text{tot}}(\mathscr{D},\mathscr{\Tilde{D}})$ is supermodular in $\mathscr{\tilde{D}}$.  
\end{theorem}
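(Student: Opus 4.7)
The plan is to prove the theorem by direct algebraic verification, essentially showing that the hypothesis has been written so as to be equivalent to Definition~\ref{defn:sub} applied to $-\mathcal{L}_{\text{tot}}$. The key observation is that $\mathcal{L}_{\text{tot}}=\mathcal{L}_C+\mathcal{L}_T$ is a set function of $\tilde{\mathscr{D}}$ with $\mathscr{D}$ held fixed, and the quantities $m_C^s+m_T^s$ and $m_C^k+m_T^k$ are precisely its discrete derivatives at $\mathscr{\Tilde{D}}_1$ and $\mathscr{\Tilde{D}}_2$ evaluated at the element $(x,y)$.

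The first step is to take the two defining equations for $m_C^s$ and $m_T^s$, add them, and use additivity $\mathcal{L}_{\text{tot}}(\mathscr{D},S)=\mathcal{L}_C(\mathscr{D},S)+\mathcal{L}_T(\mathscr{D},S)$ to write
\begin{equation*}
m_C^s+m_T^s=\mathcal{L}_{\text{tot}}(\mathscr{D},\mathscr{\Tilde{D}}_1)-\mathcal{L}_{\text{tot}}(\mathscr{D},\mathscr{\Tilde{D}}_1\cup\{(x,y)\}),
\end{equation*}
and analogously $m_C^k+m_T^k=\mathcal{L}_{\text{tot}}(\mathscr{D},\mathscr{\Tilde{D}}_2)-\mathcal{L}_{\text{tot}}(\mathscr{D},\mathscr{\Tilde{D}}_2\cup\{(x,y)\})$. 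Substituting these expressions into the hypothesis $m_C^s+m_T^s\geq m_C^k+m_T^k$ yields
\begin{equation*}
\mathcal{L}_{\text{tot}}(\mathscr{D},\mathscr{\Tilde{D}}_1)-\mathcal{L}_{\text{tot}}(\mathscr{D},\mathscr{\Tilde{D}}_1\cup\{(x,y)\})\geq \mathcal{L}_{\text{tot}}(\mathscr{D},\mathscr{\Tilde{D}}_2)-\mathcal{L}_{\text{tot}}(\mathscr{D},\mathscr{\Tilde{D}}_2\cup\{(x,y)\}).
\end{equation*}

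The second step is to rearrange terms so that the inequality matches the form in Definition~\ref{defn:sub}. Moving the negative terms across, the inequality becomes
\begin{equation*}
\mathcal{L}_{\text{tot}}(\mathscr{D},\mathscr{\Tilde{D}}_2\cup\{(x,y)\})-\mathcal{L}_{\text{tot}}(\mathscr{D},\mathscr{\Tilde{D}}_2)\geq \mathcal{L}_{\text{tot}}(\mathscr{D},\mathscr{\Tilde{D}}_1\cup\{(x,y)\})-\mathcal{L}_{\text{tot}}(\mathscr{D},\mathscr{\Tilde{D}}_1).
\end{equation*}
Writing $g(\tilde{\mathscr{D}}):=-\mathcal{L}_{\text{tot}}(\mathscr{D},\tilde{\mathscr{D}})$, this is exactly $g(\mathscr{\Tilde{D}}_1\cup\{(x,y)\})-g(\mathscr{\Tilde{D}}_1)\geq g(\mathscr{\Tilde{D}}_2\cup\{(x,y)\})-g(\mathscr{\Tilde{D}}_2)$ for all nested $\mathscr{\Tilde{D}}_1\subset \mathscr{\Tilde{D}}_2$ and $(x,y)\notin\mathscr{\Tilde{D}}_2$, which is precisely the submodularity condition for $g$ in Definition~\ref{defn:sub}. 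By the equivalence ``$g$ submodular $\iff$ $-g$ supermodular'' stated in the same definition, this establishes supermodularity of $\mathcal{L}_{\text{tot}}(\mathscr{D},\tilde{\mathscr{D}})$ in its second argument.

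Finally, I would close by noting that the hypothesis itself is non-vacuous: Lemma~\ref{lemma:supermodular} gives (weak) supermodularity of $\mathcal{L}_T(\mathscr{D},\tilde{\mathscr{D}})$ for the standard training losses, so whenever the clean-sample loss $\mathcal{L}_C(\mathscr{D},\tilde{\mathscr{D}})$ does not destroy this diminishing-returns behavior when summed with $\mathcal{L}_T$, the required inequality on the marginals $m_C+m_T$ will hold. I do not anticipate a serious obstacle in the proof itself, since it is essentially a bookkeeping argument; the only subtle point is to be careful about the sign convention, because ``supermodular test loss'' corresponds to ``diminishing returns of loss reduction,'' not to the more common submodular-gain formulation, and one must invoke $-\mathcal{L}_{\text{tot}}$ when reading off Definition~\ref{defn:sub}.
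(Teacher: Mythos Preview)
Your proposal is correct and follows essentially the same approach as the paper's own proof: both reduce to the observation that $m_C^s+m_T^s$ and $m_C^k+m_T^k$ are exactly the discrete derivatives of $\mathcal{L}_{\text{tot}}$ at $\mathscr{\Tilde{D}}_1$ and $\mathscr{\Tilde{D}}_2$, so the hypothesis is a direct restatement of Definition~\ref{defn:sub} for $-\mathcal{L}_{\text{tot}}$. Your version is slightly more explicit about the sign convention (introducing $g=-\mathcal{L}_{\text{tot}}$) and adds the non-vacuousness remark via Lemma~\ref{lemma:supermodular}, but the core argument is identical.
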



\begin{proof}
Our proof follows from the supermodularity property from Definition \ref{defn:sub}.
Using Eqn. \eqref{eq:Total_loss}, we have that
\begin{multline}
    \mathcal{L}_{tot}(\mathscr{D}, \mathscr{\Tilde{D}}_1) - \mathcal{L}_{tot}(\mathscr{D}, \mathscr{\Tilde{D}}_1\cup\{(x,y)\})\\
    - (\mathcal{L}_{tot}(\mathscr{D}, \mathscr{\Tilde{D}}_2) - \mathcal{L}_{tot}(\mathscr{D}, \mathscr{\Tilde{D}}_2\cup\{(x,y)\}))\\
    =m_C^s + m_T^s - (m_T^k+m_T^k).
\end{multline}
Since $m_C^s + m_T^s \geq  (m_T^k+m_T^k)$ by assumption, we have $\mathcal{L}_{tot}(\mathscr{D}, \mathscr{\Tilde{D}}_1) - \mathcal{L}_{tot}(\mathscr{D}, \mathscr{\Tilde{D}}_1\cup\{(x,y)\})- (\mathcal{L}_{tot}(\mathscr{D}, \mathscr{\Tilde{D}}_2) - \mathcal{L}_{tot}(\mathscr{D}, \mathscr{\Tilde{D}}_2\cup\{(x,y)\}))\geq 0$ for all $k>s>0$.
Therefore, from Def. \ref{defn:sub}, the total test-loss $\mathcal{L}_{tot}(\mathscr{D}, \mathscr{\Tilde{D}})$ is supermodular.
\end{proof}

The conditions in Theorem \ref{thm:supermodular} imply that the marginal impact on the total test-loss function by adding one (new) trigger-embedded sample into $\mathscr{\Tilde{D}}$ reduces as $|\mathscr{\Tilde{D}}|$ increases.

\newpage 
\end{document}